%===============================================================================
% History:
%
% January 23, 2017: RG revised
% December 28, 2016: Revised for JOC
% August 23, 2016: GL starts 1st version for STACS
% July 31, 2016: WL multiple DP leading to a PTAS
% June 26, 2016: GL+WL 2nd version DP independent on k
% June 10, 2016: WL 1st version DP in O(k)
%===============================================================================
\documentclass[letterpaper,UKenglish]{lipics-v2016}
%This is a template for producing LIPIcs articles. 
%See lipics-manual.pdf for further information.
%for A4 paper format use option "a4paper", for US-letter use option "letterpaper"
%for british hyphenation rules use option "UKenglish", for american hyphenation rules use option "USenglish"
% for section-numbered lemmas etc., use "numberwithinsect"
 
\usepackage{microtype}%if unwanted, comment out or use option "draft"
%\usepackage{amssymb,amsmath,latexsym,hyperref}%,lscape,}%,subfigure,multirow,hhline,}

%\graphicspath{{./graphics/}}%helpful if your graphic files are in another directory

\bibliographystyle{plain}% the recommended bibstyle

% Author macros::begin %%%%%%%%%%%%%%%%%%%%%%%%%%%%%%%%%%%%%%%%%%%%%%%%
\title{On rescheduling due to machine disruption while to minimize the total weighted completion time%
\footnote{This work was partially supported by NSERC Canada, Brain Canada, NSF China and CSC China.}}
\titlerunning{Rescheduling due to machine disruption} %optional, in case that the title is too long; the running title should fit into the top page column

\author[1,2]{Wenchang Luo}
\author[2,3]{Taibo Luo}
\author[2]{Randy Goebel}
\author[2]{Guohui Lin}
\affil[1]{Faculty of Science, Ningbo University.
  Ningbo, Zhejiang 315211, China.}
\affil[2]{Department of Computing Science, University of Alberta.
  Edmonton, Alberta T6G 2E8, Canada.
  \texttt{\{wenchang,taibo,rgoebel,guohui\}@ualberta.ca}}
\affil[3]{Business School, Sichuan University.
  Chengdu, Sichuan 610065, China.}
\authorrunning{Luo {\it et al.} version/\today} %mandatory. First: Use abbreviated first/middle names. Second (only in severe cases): Use first author plus 'et. al.'

\Copyright{Wenchang Luo, Taibo Luo, Randy Goebel and Guohui Lin}%mandatory, please use full first names. LIPIcs license is "CC-BY";  http://creativecommons.org/licenses/by/3.0/

\subjclass{Dummy classification -- please refer to \url{http://www.acm.org/about/class/ccs98-html}}% mandatory: Please choose ACM 1998 classifications from http://www.acm.org/about/class/ccs98-html . E.g., cite as "F.1.1 Models of Computation". 
\keywords{Rescheduling; machine disruption; total weighted completion time; approximation scheme}% mandatory: Please provide 1-5 keywords
% Author macros::end %%%%%%%%%%%%%%%%%%%%%%%%%%%%%%%%%%%%%%%%%%%%%%%%%

%Editor-only macros:: begin (do not touch as author)%%%%%%%%%%%%%%%%%%%%%%%%%%%%%%%%%%
% Editor-only macros::end %%%%%%%%%%%%%%%%%%%%%%%%%%%%%%%%%%%%%%%%%%%%%%%

\begin{document}

\maketitle

\begin{abstract}
%===============================================================================
We investigate a single machine rescheduling problem that arises from an unexpected machine unavailability,
after the given set of jobs has already been scheduled to minimize the total weighted completion time.
Such a disruption is represented as an unavailable time interval and is revealed to the production planner before any job is processed;
the production planner wishes to reschedule the jobs to minimize the alteration to the originally planned schedule,
which is measured as the maximum time deviation between the original and the new schedules for all the jobs.
The objective function in this rescheduling problem is to minimize the sum of the total weighted completion time and the weighted maximum time deviation,
under the constraint that the maximum time deviation is bounded above by a given value.
That is, the maximum time deviation is taken both as a constraint and as part of the objective function.
We present a pseudo-polynomial time exact algorithm and a fully polynomial time approximation scheme,
the latter of which is the best possible given that the general problem is NP-hard.
\end{abstract}

\section{Introduction}
%===============================================================================
In most modern production industries and service systems, various kinds of disruptions will occur,
such as order cancellations, new order arrivals, machine breakdown, and labor or material shortages.
An ideal scheduling system is expected to effectively adjust an originally planned schedule to account for such disruptions,
in order to minimize the effects of the disruption on overall performance.
The extent of an alteration to the originally planned schedule, to be minimized,
becomes either a second objective function (e.g., to model measurable costs),
or is formulated as a constraint to model hard-to-estimate costs,
which may be incorporated into to the original objective function.

In this paper, we investigate the single machine scheduling problem with the objective to minimize the total weighted completion time.
Rescheduling arises because of unexpected machine unavailability,
which we represent as an unavailable time interval. This unavailability
is revealed to the production planner after the given set of jobs has already been scheduled but processing has not begun.
The production planner wishes to reschedule the jobs to minimize the alteration to the originally planned schedule,
measured as the maximum time deviation between the original and the new schedules for all jobs.
The maximum time deviation is taken both as a constraint and as part of the objective function;
that is, the maximum time deviation is bounded above by a given value,
and the new objective function is to minimize the sum of the total weighted completion time and the weighted maximum time deviation.

\subsection{Problem description and definitions}
%-------------------------------------------------------------------------------
We formally present our rescheduling problem in what follows, including definitions and notation to be used throughout the paper.

We are given a set of jobs ${\cal J} = \{J_1, J_2, \ldots, J_n\}$,
where the job $J_j$ has an integer weight $w_j$ and requires an integer {\em non-preemptive} processing time $p_j$ on a single machine,
with the original objective to minimize the {\em total weighted completion time}.
This problem is denoted as $(1 \mid \mid \sum_{j=1}^n w_j C_j)$ under the three-field classification scheme~\cite{GLL79},
where $C_j$ denotes the completion time of the job $J_j$.
It is known that the {\em weighted shortest processing time} (WSPT) rule gives an optimal schedule for the problem $(1 \mid \mid \sum_{j=1}^n w_j C_j)$.
We thus assume that the jobs are already sorted in the WSPT order, that is, $\frac {p_1}{w_1} \le \frac {p_2}{w_2} \le \ldots \le \frac {p_n}{w_n}$,
and we denote this order/schedule as $\pi^*$, referred to as the {\em original} schedule
(also called the {\em pre-planned} schedule, or {\em pre-schedule}, in the literature).

The rescheduling arises due to a machine disruption: the machine becomes unavailable in the time interval $[T_1, T_2]$, where $0 \le T_1 < T_2$.
We assume, without loss of generality, that this information is known to us at time zero, so no job is yet processed.
(Otherwise, one may remove those processed jobs from consideration,
and for the partially processed job, either run it to completion and remove it, or stop processing it immediately.)

Let $\sigma$ be a schedule after resolving the machine disruption.
That is, no job of ${\cal J}$ is processed in the time interval $[T_1, T_2]$ in $\sigma$.
As in the existing literature, we use the following notation:
for each job $J_j$,

$S_j(\sigma)$: the starting time of the job $J_j$ in the schedule $\sigma$;

$C_j(\sigma)$: the completion time of the job $J_j$ in the schedule $\sigma$, and thus $C_j(\sigma) = S_j(\sigma) + p_j$;

$C_j(\pi^*)$: the completion time of the job $J_j$ in the original schedule $\pi^*$;

$\Delta_j(\pi^*, \sigma) = |C_j(\sigma) - C_j(\pi^*)|$: the {\em time deviation} of the job $J_j$ in the two schedules.

Let $\Delta_{\max}(\pi^*, \sigma) \triangleq \max_{j=1}^n \{\Delta_j(\pi^*, \sigma)\}$ denote the {\em maximum time deviation} for all jobs.
When it is clear from the context, we simplify the terms $S_j(\sigma)$, $C_j(\sigma)$, $\Delta_j(\pi^*, \sigma)$ and $\Delta_{\max}(\pi^*, \sigma)$
to $S_j$, $C_j$, $\Delta_j$ and $\Delta_{\max}$, respectively.

The time deviation of a job measures how much its actual completion time is off the originally planned,
and thus it can model the penalties resulted from the delivery time change to the satisfaction of the customers,
as well as the cost associated with the rescheduling of resources that are needed before delivery.
In our problem, the time deviation is both taken as a constraint, that is $\Delta_{\max} \le k$ for a given upper bound $k$,
and added to the objective function to minimize $\mu\Delta_{\max} + \sum_{j=1}^n w_j C_j$, for a given balancing factor $\mu \ge 0$.
That is, the goal of rescheduling is to minimize the sum of the weighted maximum time deviation and the total weighted completion time.
Thus, our problem is denoted as $(1, h_1 \mid \Delta_{\max}\le k \mid \mu\Delta_{\max} + \sum_{j=1}^n w_j C_j)$
under the three-field classification scheme~\cite{GLL79},
where the first field ``$1, h_1$'' denotes a single machine with a single unavailable time period,
the second field ``$\Delta_{\max} \le k$'' indicates the constraint on the maximum time deviation,
and the last field is the objective function.

\subsection{Related research}
%-------------------------------------------------------------------------------
We next review major research on the variants of the rescheduling problem, inspired by many practical applications.
To name a few such applications,
Bean {\it et al.}~\cite{BBM91} investigated an automobile industry application,
and proposed a heuristic {\em match-up} scheduling approach to accommodate disruptions from multiple sources.
Zweben {\it et al.}~\cite{ZDD93} studied the GERRY scheduling and rescheduling system
that supports Space Shuttle ground processing, using a heuristic constraint-based iterative repair method.
Clausen {\it et al.}~\cite{CHL01} considered a shipyard application,
where the goal for rescheduling is to store large steel plates for efficient access by two non-crossing portal cranes that move the plates to appropriate places.
Vieira {\it et al.}~\cite{VHL03}, Aytug {\it et al.}~\cite{ALM05}, Herroelen and Leus~\cite{HL05} and Yang {\it et al.}~\cite{YQY05}
provided extensive reviews of the rescheduling literature, including taxonomies, strategies and algorithms, for both deterministic and stochastic environments.

In a seminal paper on rescheduling theory for a single machine,
Hall and Potts~\cite{HP04} considered the rescheduling problem required to deal with the arrival of a new set of jobs,
which disrupts the pre-planned schedule of the original jobs.
Such a problem is motivated by the unexpected arrival of new orders in practical manufacturing systems.
First, the set of original jobs has been optimally scheduled to minimize a cost function, typically the maximum lateness or the total completion time; 
but no job has yet been executed.  In this case,
promises have been made to the customers based on the schedule.
Then an unexpected new set of jobs arrives before the processing starts;
the production planner needs to insert the new jobs into the existing schedule seeking to minimize change to the original plan.
The measure of change to the original schedule is the maximum or total sequence deviation,
or the maximum or total time deviation.
For both cases --- where the measure of change is modeled only as a constraint, or
where the measure of change is modeled both as a constraint and is added to the original cost objective --- 
the authors provide either an efficient algorithm or an intractability proof for several problem variants.

Yuan and Mu~\cite{YM07} studied a rescheduling problem similar to the one in \cite{HP04},
but with the objective to minimize the  makespan subject to a limit on the maximum sequence deviation of the original jobs;
they show that such a solution is polynomial time solvable.
Hall {\it et al.}~\cite{HLP07} considered an extension of the rescheduling problem in \cite{HP04},
where the arrivals of multiple new sets of jobs create repeated disruptions to minimize the maximum lateness of the jobs,
subject to a limit on the maximum time deviation of the original jobs;
they proved the NP-hardness and presented several approximation algorithms with their worst-case performance analysis.

Hall and Potts~\cite{HP10} also studied the case where the disruption is a delayed subset of jobs (or called {\em job unavailability}),
with the objective to minimize the total weighted completion time, under a limit on the maximum time deviation;
they presented an exact algorithm, an intractability proof, a constant-ratio approximation algorithm, and a fully polynomial-time approximation scheme (FPTAS).
Hoogeveen {\it et al.}~\cite{HLT12} studied the case where the disruption is the arrival of new jobs and
the machine needs a setup time to switch between processing an original job and processing a new job;
their bi-criterion objective is to minimize the makespan and to minimize the maximum (or total) positional deviation or the maximum (or total) time deviation,
with certain assumptions on the setup times.
They presented a number of polynomial time exact algorithms and intractability proofs for several problem variants.
Zhao and Yuan~\cite{ZY13} examined the case where the disruption is the arrival of new jobs which are associated with release dates,
formulated a bi-objective function to minimize the makespan and to minimize the total sequence deviation, under a limit on the total sequence deviation;
they presented a strongly polynomial-time algorithm for finding all Pareto optimal points of the problem.

Liu and Ro~\cite{LR14} considered the same machine disruption as ours --- the machine is unavailable for a period of time ---
but with the objective to minimize the makespan (or the maximum job lateness) and the weighted maximum time deviation,
under a limit on the maximum time deviation;
they presented a pseudo-polynomial time exact algorithm, a $2$-approximation algorithm, and an FPTAS.
Yin {\it et al.}~\cite{YCW16} studies the rescheduling problem on multiple identical parallel machines with multiple machine disruptions,
%where the jobs are unweighted,
and the bi-criterion objective is to minimize the total completion time and to minimize the total virtual tardiness (or the maximum time deviation);
in addition to hardness results, they presented a two-dimensional FPTAS when there is exactly one machine disruption.

Among all related research in the above, the work of Liu and Ro~\cite{LR14} is the most relevant to our work in terms of the scheduling environment,
and the work of Hall and Potts~\cite{HP10} is the most relevant in terms of the original objective function.

\subsection{Our contributions and organization}
%-------------------------------------------------------------------------------
Our problem $(1, h_1 \mid \Delta_{\max}\le k \mid \mu\Delta_{\max} + \sum_{j=1}^n w_j C_j)$ includes three interesting special cases,
some of which have received attention in the literature:
when the given bound $k$ is sufficiently large,
the time deviation constraint becomes void and our problem reduces to the total cost problem $(1, h_1 \mid \mid \mu\Delta_{\max} + \sum_{j=1}^n w_j C_j)$;
when the time deviation factor $\mu = 0$,
our problem reduces to the constrained rescheduling problem, without the need to minimize the time deviation;
and finally, when the given bound $k$ is sufficiently large and the time deviation factor $\mu = 0$,
our problem reduces to the classic scheduling problem with a machine unavailability period $(1, h_1 \mid \mid \sum_{j=1}^n w_jC_j)$~\cite{Lee96},
which is NP-hard.

The rest of the paper is organized as follows:
In Section~2, we derive structural properties that are associated with the optimal solutions to our rescheduling problem
$(1, h_1 \mid \Delta_{\max}\le k \mid \mu\Delta_{\max} + \sum_{j=1}^n w_j C_j)$.
In Section~3, we present a pseudo-polynomial time exact algorithm.
In Section~4, we first develop another slower pseudo-polynomial time exact algorithm solving the special case where $\mu = 0$,
this is, the problem $(1, h_1 \mid \Delta_{\max}\le k \mid \sum_{j=1}^n w_j C_j)$.
Based on this slower exact algorithm, we present an FPTAS for our general rescheduling problem.
The FPTAS is an integration of a linear number of FPTASes.
We conclude our paper in the last section, with some final remarks.

\section{Preliminaries}
%===============================================================================
Firstly, from the NP-hardness of the classic scheduling problem with a machine unavailability period $(1, h_1 \mid \mid \sum_{j=1}^n w_jC_j)$~\cite{Lee96},
we conclude that our rescheduling problem $(1, h_1 \mid \Delta_{\max}\le k \mid \mu\Delta_{\max} + \sum_{j=1}^n w_j C_j)$ is also NP-hard.

Recall that we are given a set of jobs ${\cal J} = \{J_1, J_2, \ldots, J_n\}$,
where each job $J_j$ has a positive weight $w_j$ and a positive processing time $p_j$,
in the WSPT order, a machine unavailability period $[T_1, T_2]$ with $0 \le T_1 < T_2$, an upper bound $k$ on the maximum time deviation,
and a balancing factor $\mu \ge 0$.
All these $w_j$'s, $p_j$'s, $T_1$, $T_2$, $k$ are integers, and $\mu$ is a rational number.
For any feasible schedule $\sigma$ to the rescheduling problem,
from $\Delta_{\max} \le k$ we conclude that for every job $J_j$, $C_j(\pi^*) - k \le C_j(\sigma) \le C_j(\pi^*) + k$.
For ease of presentation, we partition $\sigma$ into two halves, similar to the existing literature:
the prefix of the schedule $\sigma$ with the jobs completed before or at time $T_1$ is referred to as the {\em earlier schedule} of $\sigma$,
and the suffix of the schedule $\sigma$ with the jobs completed after time $T_2$ is referred to as the {\em later schedule} of $\sigma$.
We assume, without loss of generality, that with the same $\Delta_{\max}$, all the jobs are processed as early as possible in $\sigma$
(to achieve the minimum possible total weighted completion time).
Let $\sigma^*$ denote an optimal schedule to the rescheduling problem.

\subsection{Problem setting}
%-------------------------------------------------------------------------------
Let
\begin{equation}
\label{eq1}
p_{\min} \triangleq \min_{j=1}^n p_j, \ p_{\max} \triangleq \max_{j=1}^n p_j, \ P \triangleq \sum_{j=1}^n p_j,
\ w_{\max} \triangleq \max_{j=1}^n w_j, \mbox{ and } W \triangleq \sum_{j=1}^n w_j.
\end{equation}

Using the original schedule $\pi^* = (1, 2, \ldots, n)$, we compute
\begin{equation}
\label{eq2}
j_1 \triangleq \min\{j \mid C_j(\pi^*) > T_1\}, \quad j_2 \triangleq \min\{j \mid S_j(\pi^*) \ge T_2\},
\end{equation}
i.e., $J_{j_1}$ is the first job in $\pi^*$ completed strictly after time $T_1$ and $J_{j_2}$ is the first job in $\pi^*$ starting processing at or after time $T_2$.
%\footnote{Wenchang, I changed your definition.}
One clearly sees that if $j_1$ is void, then no rescheduling is necessary;  in the sequel, we always assume that the job $J_{j_1}$ exists.
Nevertheless we note that $j_2$ could be void, which means that all the jobs start processing strictly before time $T_2$ in $\pi^*$.

We may furthermore assume the following relations hold among $p_{\min}, P, T_1, T_2$ and $k$, to ensure that the rescheduling problem is non-trivial:
\begin{equation}
\label{eq3}
p_{\min} \le T_1 < P, \mbox{ and } T_2 - S_{j_1}(\pi^*) \le k. %< \max\{T_2, P - p_{\min}\}.
\end{equation}
%
%\footnote{Wenchang, I removed the upper bound of ``$k < \max\{T_2, P - p_{\min}\}$'', since there is still a need to minimize $\Delta_{\max}$.}
%
For a quick proof, firstly, if $T_1 < p_{\min}$, then no job can be processed before the machine unavailability period and
thus the schedule $\pi^*$ remains optimal
%\footnote{It is optimal under the condition that $T_2 \le k$; otherwise no feasible solution exists.}
except that the job processing starts at time $T_2$ instead of time $0$;
secondly, if $T_1 \ge P$, then no rescheduling is necessary.
Lastly, from the definition of the job $J_{j_1}$ in Eq.~(\ref{eq2}),
we conclude that at least one job among $J_1, J_2, \ldots, J_{j_1}$ must be completed after time $T_2$ in any feasible rescheduling solution,
with its time deviation at least $T_2 - S_{j_1}(\pi^*)$.
%\footnote{This lower bound corresponds to to start processing the job $J_{j_1}$ at time $T_2$.}
Therefore, $T_2 - S_{j_1}(\pi^*) \le k$, as otherwise no feasible solution exists.
%To prove that $k < \max\{T_2, P - p_{\min}\}$, we need the next lemma.

%We now turn to prove that $k < \max\{T_2, P - p_{\min}\}$.
%Given a feasible schedule $\sigma$ satisfying Lemma~\ref{lemma01},
%for the job $J_j$ such that $C_j > C_j(\pi^*)$,
%if $C_j \le T_2$, then clearly $\Delta_j = C_j - C_j(\pi^*) \le T_2$;
%if $C_j > T_2$, from Lemma~\ref{lemma01},
%we know that the jobs processed in the time interval $[T_2, C_j]$ in $\sigma$ are in the same order as they appear in $\pi^*$,
%and thus $C_j(\pi^*) \ge C_j - T_2$, implying that $\Delta_j = C_j - C_j(\pi^*) \le T_2$.
%On the other hand, for the job $J_j$ such that $C_j \le C_j(\pi^*)$,
%we have $\Delta_j = C_j(\pi^*) - C_j \le P - p_{\min}$, since $C_j(\pi^*) \le P$ and $C_j \ge p_{\min}$.
%We therefore have proved that $\Delta_{\max} \le \max\{T_2, P - p_{\min}\}$.
%It follows that $k < \max\{T_2, P - p_{\min}\}$,
%as otherwise the constraint ``$\Delta_{\max} \le k$'' is void for any feasible schedule $\sigma$ satisfying Lemma~\ref{lemma01}.

\subsection{Structure properties of the optimal schedules}\label{sp}
%-------------------------------------------------------------------------------
There is a very regular property of our target optimal schedules to the rescheduling problem, stated in the following lemma.

\begin{lemma}
\label{lemma01}
%\cite{LR14}
There exists an optimal schedule $\sigma^*$ for the rescheduling problem $(1, h_1 \mid \Delta_{\max}\le k \mid \mu\Delta_{\max} + \sum_{j=1}^n w_j C_j)$,
in which
\begin{description}
\parskip=0pt
\item[(a)]
	the jobs in the earlier schedule are in the same order as they appear in $\pi^*$;
\item[(b)]
	the jobs in the later schedule are also in the same order as they appear in $\pi^*$.
\end{description}
\end{lemma}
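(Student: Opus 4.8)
The plan is to establish both parts by a standard exchange argument, showing that any inversion (relative to the WSPT/$\pi^*$ order) within either the earlier schedule or the later schedule can be removed without increasing the objective and without violating the constraint $\Delta_{\max}\le k$. I would start from an arbitrary optimal schedule $\sigma^*$ in which all jobs are processed as early as possible (as assumed in the Preliminaries), and argue that if it contains an inversion, we can find one between two jobs that are \emph{consecutive} in their respective sub-schedule (earlier or later), swap them, and repeat; since each swap strictly reduces the number of inversions, this process terminates in a schedule of the desired form.

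For part (b), the later schedule, consider two jobs $J_i$ and $J_j$ with $i>j$ (so $\frac{p_i}{w_i}\ge\frac{p_j}{w_j}$) that are processed consecutively in the later schedule of $\sigma^*$, with $J_i$ immediately before $J_j$. Swapping them does not change the set of completion times occurring in the block they occupy, nor does it affect any other job; by the classical WSPT interchange computation, $w_iC_i+w_jC_j$ does not increase under the swap (it decreases unless $\frac{p_i}{w_i}=\frac{p_j}{w_j}$). So $\sum_j w_jC_j$ does not increase. For the constraint, the two completion times available to $\{J_i,J_j\}$ before the swap are assigned to $J_i$ (the smaller) and $J_j$ (the larger), and after the swap $J_j$ gets the smaller one and $J_i$ the larger. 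I would check that each of $\Delta_i,\Delta_j$ still satisfies the bound: since $i>j$ we have $C_i(\pi^*)\ge C_j(\pi^*)$, so giving the larger completion time to the job with the larger original completion time, and the smaller to the job with the smaller, only helps (or at worst keeps) the deviations within $[-k,k]$; this can be made precise by a short case analysis on the signs of the deviations. Finally $\Delta_{\max}$ appears in the objective only through its value, which is a max over all jobs and is not increased, so $\mu\Delta_{\max}$ does not increase either. Hence the swapped schedule is still feasible and no worse, and it has one fewer inversion.

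Part (a), the earlier schedule, is symmetric: for consecutive jobs $J_i$ immediately before $J_j$ with $i>j$, the same WSPT interchange shows $\sum w_jC_j$ does not increase, and since both jobs complete no later than $T_1$ both before and after the swap, the swap stays entirely within the earlier portion and does not disturb the machine-unavailability gap or the later schedule. The deviation check is the same as above. One subtlety to handle cleanly is that the two parts must hold \emph{simultaneously} in a single schedule $\sigma^*$: I would phrase the argument as a single potential-function (total number of $\pi^*$-inversions counted within the earlier block plus within the later block) decreasing with every swap, so that after finitely many swaps we reach an optimal schedule in which neither block contains an inversion, giving (a) and (b) at once.

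The main obstacle — really the only place that needs care rather than routine bookkeeping — is verifying that the swap never pushes a deviation outside $[-k,k]$. The completion-time values occupied by the block $\{J_i,J_j\}$ are fixed; the swap merely reassigns which job gets which. One has to confirm that reassigning the later value to the job $J_i$ with the larger original completion time $C_i(\pi^*)$ cannot create a deviation exceeding $k$ that was not already present. This follows because $C_i(\pi^*)\ge C_j(\pi^*)$ together with $C_i^{\text{new}}=C_j^{\text{old}}+p_i$ (or the appropriate relation when $p$'s differ), but it does require writing out the inequalities in both directions; I expect this to be the crux of the formal proof while everything else is the familiar WSPT exchange.
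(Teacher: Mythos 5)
Your overall route is the same as the paper's: an adjacent-pair interchange argument in which the classical WSPT exchange handles the $\sum_j w_jC_j$ part, a separate check shows the swap keeps every time deviation (and hence $\Delta_{\max}$, which also enters the objective through $\mu\Delta_{\max}$) under control, and a finite sequence of swaps (your inversion-counting potential, the paper's ``sequence of job swappings'') yields the claimed optimal schedule; the paper proves (a) in the main text and (b) by the identical argument in its appendix. However, the step you yourself flag as the crux is justified in your sketch by a claim that is false: the swap does \emph{not} leave ``the set of completion times occurring in the block'' unchanged. If $J_i$ immediately precedes $J_j$ in the block starting at time $t$ and they are interchanged, the later completion time $t+p_i+p_j$ is preserved, but the earlier one changes from $t+p_i$ to $t+p_j$, and these differ whenever $p_i\ne p_j$ (the WSPT order does not order the processing times). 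So the ``reassign a fixed pair of values to the two jobs'' picture cannot establish feasibility: the job moved forward, $J_j$, gets a completion time strictly smaller than its old one (and possibly smaller than any value previously occurring in the block), so if $C_j(\pi^*)$ exceeds its new completion time, its deviation strictly \emph{increases} -- something the fixed-set argument would declare impossible. (Your identity $C_i^{\mathrm{new}}=C_j^{\mathrm{old}}+p_i$ is also off; the correct relation is $C_i^{\mathrm{new}}=C_j^{\mathrm{old}}$.)

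What closes this gap -- and is essentially the entire content of the paper's proof -- is a case analysis comparing each swapped job's \emph{new} deviation with the \emph{old} deviations of \emph{both} jobs, using the fact that $J_j$ precedes $J_i$ in $\pi^*$, i.e.\ $C_j(\pi^*)\le C_i(\pi^*)-p_i$. In your labelling ($i>j$, $J_i$ immediately before $J_j$, block starting at $t$): since $C_i(\sigma')=C_j(\sigma^*)$, the job moved backward is easy to handle; and in the worrying case $C_j(\sigma')=t+p_j<C_j(\pi^*)$ one has $C_i(\pi^*)\ge C_j(\pi^*)+p_i>t+p_i=C_i(\sigma^*)$, hence $\Delta_i(\sigma^*)=C_i(\pi^*)-C_i(\sigma^*)\ge C_j(\pi^*)-t>C_j(\pi^*)-(t+p_j)=\Delta_j(\sigma')$, so the increased deviation of $J_j$ is still dominated by $J_i$'s old deviation and the schedule stays feasible with $\Delta_{\max}$ not increased. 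The paper organizes exactly this check into three cases (in its labelling: $C_j(\sigma')\ge C_j(\pi^*)$; $C_j(\sigma')<C_j(\pi^*)$ with $C_i(\sigma')\le C_i(\pi^*)$; and $C_j(\sigma')<C_j(\pi^*)$ with $C_i(\sigma')>C_i(\pi^*)$). Once you replace the fixed-set claim by this comparison, your argument coincides with the paper's; without it, the feasibility step is not proved.
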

\begin{proof}
By contradiction, assume $(J_i, J_j)$ is the first pair of jobs for which $J_i$ precedes $J_j$ in $\pi^*$, i.e. $p_i/w_i \le p_j/w_j$,
but $J_j$ immediately precedes $J_i$ in the earlier schedule of $\sigma^*$.
Let $\sigma'$ denote the new schedule obtained from $\sigma^*$ by swapping $J_j$ and $J_i$.
If $C_j(\sigma') \ge C_j(\pi^*)$, then $C_i(\sigma') \ge C_i(\pi^*)$ too,
and thus $\Delta_j(\sigma', \pi^*) \le \Delta_i(\sigma', \pi^*) < \Delta_i(\sigma^*, \pi^*)$ due to $p_j > 0$;
if $C_j(\sigma') < C_j(\pi^*)$ and $C_i(\sigma') \le C_i(\pi^*)$,
then $\Delta_i(\sigma', \pi^*) \le \Delta_j(\sigma', \pi^*) < \Delta_j(\sigma^*, \pi^*)$ due to $p_i > 0$;
lastly if $C_j(\sigma') < C_j(\pi^*)$ and $C_i(\sigma') > C_i(\pi^*)$,
then $\Delta_i(\sigma', \pi^*) < \Delta_i(\sigma^*, \pi^*)$ and $\Delta_j(\sigma', \pi^*) < \Delta_j(\sigma^*, \pi^*)$.
That is, $\sigma'$ is also a feasible reschedule.

Furthermore, the weighted completion times contributed by $J_i$ and $J_j$ in $\sigma'$ is no more than those in $\sigma^*$,
implying the optimality of $\sigma'$.
It follows that, if necessary,
after a sequence of job swappings, we will obtain an optimal reschedule in which the jobs in the earlier schedule are in the same order as they appear in $\pi^*$.

The second part of the lemma can be similarly proved (see the Appendix A).
\end{proof}

There are several more properties in the following Lemma~\ref{lemma02}, which are important to the design and analysis of the algorithms to be presented.
Most of these properties also hold for the optimal schedules to a similar makespan rescheduling problem
$(1, h_1 \mid \Delta_{\max}\le k \mid \mu\Delta_{\max} + C_{\max})$~\cite{LR14}.
We remark that the makespan is only a part of our objective,
and the original schedule for the makespan scheduling problem $(1, h_1 \mid \Delta_{\max}\le k \mid \mu\Delta_{\max} + C_{\max})$ can be arbitrary.
However, for our problem, the jobs in $\pi^*$ are in the special WSPT order.
Our goal is to compute an optimal schedule satisfying (the properties stated in) Lemmas~\ref{lemma01} and \ref{lemma02},
and thus we examine only those feasible schedules $\sigma$ satisfying Lemmas~\ref{lemma01} and \ref{lemma02}.

Recall that we assume, for every feasible schedule $\sigma$ to the rescheduling problem
with the same $\Delta_{\max}$, all the jobs are processed as early as possible in $\sigma$
(to achieve the minimum possible total weighted completion time).
However this does not rule out the possibility that the machine would idle.
In fact, the machine has to idle for a period of time right before time $T_1$, if no job can be fitted into this period for processing;
also, the machine might choose to idle for a period of time and then proceeds to process a job, in order to obtain a smaller $\Delta_{\max}$.
In the sequel, our discussion is about the latter kind of machine idling.
The good news is that there are optimal schedules in which the machine has at most one such idle time period, as shown in the following Lemma~\ref{lemma02}.
This makes our search for optimal schedules much easier.

\begin{lemma}
\label{lemma02}
%\cite{LR14}
There exists an optimal schedule $\sigma^*$ for the rescheduling problem $(1, h_1 \mid \Delta_{\max}\le k \mid \mu\Delta_{\max} + \sum_{j=1}^n w_j C_j)$,
in which
\begin{description}
\parskip=0pt
\item[(a)]
	$C_j(\sigma^*) \le C_j(\pi^*)$ for each job $J_j$ in the earlier schedule;
\item[(b)]
	the machine idles for at most one period of time in the earlier schedule;
\item[(c)]
	each job in the earlier schedule after the idle time period is processed exactly $\Delta_{\max}$ time units earlier than in $\pi^*$;
\item[(d)]
	the jobs in the earlier schedule after the idle time period are consecutive in $\pi^*$;
\item[(e)]
	the job in the earlier schedule right after the idle time period has a starting time at or after time $T_2$ in $\pi^*$;
\item[(f)]
	the machine does not idle in the later schedule;
\item[(g)]
	the first job in the later schedule reaches the maximum time deviation among all the jobs in the later schedule.
\end{description}
\end{lemma}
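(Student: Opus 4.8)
The plan is to establish the seven properties in sequence, each by an exchange or shifting argument, reusing earlier parts as we go. Throughout I would start from an optimal schedule $\sigma^*$ already satisfying Lemma~\ref{lemma01}, and I would keep in force the standing assumption that, for a fixed value of $\Delta_{\max}$, all jobs are processed as early as possible; this means the only ``discretionary'' idle time is the one whose existence we are trying to bound.

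\textbf{Parts (a), (f), (g).} For (a), suppose some job $J_j$ in the earlier schedule has $C_j(\sigma^*) > C_j(\pi^*)$. Since the earlier schedule follows the $\pi^*$-order (Lemma~\ref{lemma01}(a)) and every earlier job completes by $T_1$, while $C_j(\pi^*) \le C_{j_1-1}(\pi^*) \le T_1$ would force a contradiction with the WSPT prefix sums; more carefully, I would argue that if the last earlier job finishes later than in $\pi^*$ then shifting the whole earlier block left (it starts no earlier than $0$ and the jobs are in $\pi^*$-order) strictly decreases $\sum w_jC_j$ without increasing $\Delta_{\max}$, contradicting optimality and the earliest-processing convention. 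For (f), if the machine idles inside the later schedule, shift the block of jobs after the idle gap leftward; since these jobs are in $\pi^*$-order (Lemma~\ref{lemma01}(b)), every such job's completion time only decreases or stays put, so $\Delta_{\max}$ does not increase and $\sum w_jC_j$ strictly decreases — again a contradiction. Part (g) is then immediate: once there is no idle time in the later schedule (part (f)) and the later jobs are in $\pi^*$-order, each later job is shifted right by the same amount relative to $\pi^*$ as the first later job, hence the first later job attains the maximum deviation among later jobs; the small subtlety is to check the later jobs are a suffix of $\pi^*$ under the order, which follows from Lemma~\ref{lemma01} together with (a).

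\textbf{Parts (b)--(e).} These are the heart of the lemma and where I expect the real work. Given (a), every earlier job completes weakly earlier than in $\pi^*$, so its deviation equals $C_j(\pi^*)-C_j(\sigma^*)\ge 0$. Now suppose there are two or more discretionary idle intervals in the earlier schedule. Consider the block of earlier jobs strictly after the \emph{first} such gap. I would try to slide that entire block left until either the gap closes, or some job in the block reaches deviation exactly $\Delta_{\max}$ — whichever happens first. Sliding left never increases any deviation (by (a) it only shrinks $C_j(\pi^*)-C_j(\sigma^*)$, wait — it \emph{increases} it toward $\Delta_{\max}$), so I must slide left only as far as the $\Delta_{\max}$ constraint and the unavailability window $[T_1,T_2]$ permit; crucially, sliding left strictly decreases $\sum w_jC_j$, so if it is feasible it must already have been done under the earliest-processing convention — this is the lever that collapses multiple gaps into one and forces (b). Then (c) follows because the jobs after the (unique) surviving gap cannot be moved left any further, and by (a) the only obstruction is that they have hit deviation exactly $\Delta_{\max}$; (d) follows from Lemma~\ref{lemma01}(a) since that block is a set of jobs in $\pi^*$-order packed with no internal idle time, hence consecutive in $\pi^*$; and (e) follows because if the first job of that block had $S(\pi^*) < T_2$ in the original schedule, then in $\pi^*$ it would be (partly) inside the forbidden window, contradicting that $\pi^*$ starts at time $0$ with the machine available — more precisely, a job processed $\Delta_{\max}$ units before its $\pi^*$-position and lying before $T_1$ in $\sigma^*$, whose $\pi^*$-start is before $T_2$, can be shown to admit a feasible leftward nudge or to violate the gap structure.

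\textbf{Main obstacle.} The delicate point is the simultaneous bookkeeping in parts (b)--(e): when I slide a block left I must certify three things at once — that no deviation exceeds $k$, that no job intrudes into $[T_1,T_2]$, and that the move is strictly improving so that optimality plus the earliest-processing convention rule out the ``before'' configuration. Handling the boundary case where the sliding block abuts $T_1$ (so it cannot move left because $[T_1,T_2]$ is blocked, not because of the $\Delta_{\max}$ bound) needs separate care; in that situation I would instead move individual jobs from the front of the block into the gap or across, invoking the $\pi^*$-order to keep feasibility. I would relegate the fully detailed exchange computations to an appendix, mirroring the treatment of the analogous makespan problem in \cite{LR14}, and in the main text give the exchange diagrams and the sign of the change in $\sum w_jC_j$ and in $\Delta_{\max}$ for each case.
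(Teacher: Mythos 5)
Your overall exchange-and-shift strategy is the same as the paper's (Appendix B), and your treatment of (a) and (f) is in the right spirit, but several of the specific derivations rest on false structural claims. For (g), the later-schedule jobs are \emph{not} a suffix of $\pi^*$ and are not all shifted right by the same amount: the first later job is taken from $\{J_1,\ldots,J_{j_1}\}$ while jobs of larger index may remain in the earlier schedule, so the ``equal shift'' claim fails. What is true, and what the paper uses, is that along the idle-free later schedule the time deviations are \emph{non-increasing}, because every $\pi^*$-index that is skipped (kept in the earlier schedule) strictly reduces the lateness of all subsequent later jobs; that monotonicity, not a suffix property, gives (g). The same confusion sinks your derivation of (d): a set of jobs processed contiguously in $\pi^*$-order need not be consecutive in $\pi^*$ (the portion of the earlier schedule before the gap is itself a counterexample), so (d) does not follow from Lemma~\ref{lemma01}(a) plus packing; it follows from (c), since skipping a job of positive length would force the next block job's deviation strictly above $\Delta_{\max}$.

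The genuine gaps are (c), (e), and the stall case in (b). Sliding the whole post-gap block left only shows that the \emph{largest} deviation in the block (attained at its last job, by the monotonicity above) is pinned at $\Delta_{\max}$; it does not show that every block job is exactly $\Delta_{\max}$ early. The paper's route is to nudge the \emph{first} post-gap job alone one unit into the gap: if its deviation were below $\Delta_{\max}$ this is feasible, keeps $\Delta_{\max}$, and strictly decreases $\sum_j w_jC_j$, a contradiction; hence the first block job is exactly $\Delta_{\max}$ early, and then non-decreasing deviations bounded by $\Delta_{\max}$ force all of them equal, which yields (c) and (d) together. For (e), your first reason is a non sequitur: $\pi^*$ is the pre-disruption schedule and ignores $[T_1,T_2]$, so a $\pi^*$-start before $T_2$ contradicts nothing; and the ``feasible leftward nudge'' fallback cannot work because by (c) the block is already pinned at $\Delta_{\max}$. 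The paper instead observes that the idle gap forces some job $J_b$ with $S_b(\pi^*) < S_j(\sigma^*)$ into the later schedule, whose deviation is at least $T_2 - S_b(\pi^*)$; if the first post-gap job $J_j$ had $S_j(\pi^*) < T_2$, then $\Delta_{\max} = \Delta_j = S_j(\pi^*) - S_j(\sigma^*) < T_2 - S_b(\pi^*) \le \Delta_b$, a contradiction. Finally, in (b) your ``slide until the gap closes or a job hits $\Delta_{\max}$'' argument can stall when the blocks after \emph{both} gaps are deviation-pinned; the paper avoids this by comparing the deviations of the two jobs immediately following the two gaps (one-unit nudges force them to be equal), showing that no job with index between them can sit in the later schedule, and then the equality of the two deviations forces the second gap to have length zero.
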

\begin{proof}
Item (a) is a direct consequence of Lemma~\ref{lemma01}, since the jobs in the earlier schedule are in the same order as they appear in $\pi^*$,
which is the WSPT order.

Proofs of (b)--(g) are similar to those in \cite{LR14}, where they are proven for an arbitrary original schedule, while the WSPT order is only a special order.
For completeness, the proofs are included in the Appendix B.
\end{proof}

Among the jobs $J_1, J_2, \ldots, J_{j_1}$, we know that some of them will be processed in the later schedule of the optimal schedule $\sigma^*$.
By Lemma~\ref{lemma01},
we conclude that the first job in the later schedule is from $J_1, J_2, \ldots, J_{j_1}$.
We use $J_a$ to denote this job, and consequently $(J_1, J_2, \ldots, J_{a-1})$ remains as the prefix of the earlier schedule.
The time deviation of $J_a$ is $\Delta_a = T_2 - S_a(\pi^*) \le k$ (which could be used to further narrow down the candidates for $J_a$).

\begin{corollary}
\label{coro03}
In an optimal schedule $\sigma^*$ satisfying Lemmas~\ref{lemma01} and \ref{lemma02},
suppose the job $J_a$ is the first job in the later schedule.
For each $j = a+1, a+2, \ldots, {j_2 - 1}$, if the job $J_j$ is in the earlier schedule, then its time deviation $\Delta_j$ is less than $\Delta_a$.
\end{corollary}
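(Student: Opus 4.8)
The plan is to bound $\Delta_j$ directly against $\Delta_a = T_2 - S_a(\pi^*)$ by squeezing the quantity $C_j(\sigma^*)$ between two values that differ by less than $\Delta_a$, using only Lemma~\ref{lemma01}, Lemma~\ref{lemma02}(a), and the definition of $j_2$ in Eq.~(\ref{eq2}).

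First I would record the situation in $\pi^*$. Since $a+1 \le j \le j_2-1$, the index $j$ is strictly smaller than $j_2 = \min\{j \mid S_j(\pi^*) \ge T_2\}$, so $S_j(\pi^*) < T_2$ and hence $C_j(\pi^*) = S_j(\pi^*) + p_j < T_2 + p_j$. Next I would establish the lower bound $C_j(\sigma^*) \ge S_a(\pi^*) + p_j$. Because $J_a$ is the first job of the later schedule and, by Lemma~\ref{lemma01}(b), the later schedule lists jobs in the order of $\pi^*$, none of $J_1,\ldots,J_{a-1}$ can lie in the later schedule, as any such job would precede $J_a$ there. Hence $J_1,\ldots,J_{a-1}$ all lie in the earlier schedule, and by Lemma~\ref{lemma01}(a) they occupy its first $a-1$ positions; they therefore consume at least $\sum_{i=1}^{a-1} p_i = S_a(\pi^*)$ units of machine time before any subsequent job of the earlier schedule can start. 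Since $j > a-1$, the job $J_j$ is such a subsequent job, so it starts no earlier than $S_a(\pi^*)$ and thus $C_j(\sigma^*) \ge S_a(\pi^*) + p_j$. This bound is robust: if the machine idles somewhere before $J_j$ in the earlier schedule, $C_j(\sigma^*)$ only becomes larger, so the precise location of the idle period of Lemma~\ref{lemma02}(b) plays no role here.

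Finally, since $J_j$ belongs to the earlier schedule, Lemma~\ref{lemma02}(a) gives $C_j(\sigma^*) \le C_j(\pi^*)$, so $\Delta_j = C_j(\pi^*) - C_j(\sigma^*)$. Combining the two bounds then yields $\Delta_j < (T_2 + p_j) - (S_a(\pi^*) + p_j) = T_2 - S_a(\pi^*) = \Delta_a$, which is exactly the claim (and the same reasoning applies verbatim when $j_2$ is void, reading the index range as $a+1,\ldots,n$).

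I do not expect a genuine obstacle. The one step requiring care is the assertion that $J_1,\ldots,J_{a-1}$ form precisely the prefix of the earlier schedule of $\sigma^*$, which is where Lemma~\ref{lemma01} and the defining property of $J_a$ as the first job of the later schedule are essential. One should also note that the inequality $S_j(\pi^*) < T_2$ is strict, owing to the ``$\min$'' in the definition of $j_2$, since this is what makes the conclusion $\Delta_j < \Delta_a$ strict rather than merely $\Delta_j \le \Delta_a$.
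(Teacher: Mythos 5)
Your proof is correct and follows essentially the same route as the paper's (one-line) argument: use $S_j(\pi^*) < T_2$ from the definition of $j_2$, together with $C_j(\sigma^*)\le C_j(\pi^*)$ and the fact that $J_1,\ldots,J_{a-1}$ occupy the prefix of the earlier schedule, to get $\Delta_j < T_2 - S_a(\pi^*) = \Delta_a$. You simply spell out the intermediate bounds that the paper leaves implicit.
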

\begin{proof}
From the definition of $j_2$ in Eq.~(\ref{eq2}), we know that $S_j(\pi^*) < T_2$, and therefore its time deviation is $\Delta_j < T_2 - S_a(\pi^*) = \Delta_a$.
\end{proof}

\section{A dynamic programming exact algorithm}
%===============================================================================
In this section, we develop an exact algorithm {\sc DP-1} for the rescheduling problem $(1, h_1 \mid \Delta_{\max}\le k \mid \mu\Delta_{\max} + \sum_{j=1}^n w_j C_j)$,
to compute an optimal schedule $\sigma^*$ satisfying Lemmas~\ref{lemma01} and \ref{lemma02}.
The key idea is as follows:
By Lemma~\ref{lemma02}(g), we first guess the job $J_a$ that starts processing at time $T_2$ in $\sigma^*$,
with its time deviation $\Delta_a = T_2 - S_a(\pi^*) \le k$.
From this initial partial schedule,%
\footnote{In the entire paper, we examine only feasible partial schedules restricted to the considered jobs;
these partial schedules can always be completed into certain feasible full schedules.}
our algorithm constructs some feasible full schedules satisfying Lemmas~\ref{lemma01} and \ref{lemma02}.
To guarantee that our algorithm constructs an optimal full schedule in pseudo-polynomial time,
we use a ``hash'' function to map a partial schedule into a quadruple, such that only one partial schedule per quadruple is used in the computation,
to be described in detail in the following.
At the end, the full schedule with the minimum objective function value is returned as the solution $\sigma^*$.

We notice from Lemmas~\ref{lemma01} and \ref{lemma02}(a) that the time deviations of the jobs in the earlier schedule of $\sigma^*$ are non-decreasing,
with the first $(a-1)$ ones being $0$'s.
From the constraint $\Delta_{\max} \le k$, the last job in the schedule $\pi^*$ that can possibly be in the earlier schedule of $\sigma^*$ is $J_{j_3}$ with
\begin{equation}
\label{eq4}
j_3 \triangleq \max\{j \mid C_j(\pi^*) - k \le T_1\}.
\end{equation}

We use $P_i$ to denote the total processing time of the first $i$ jobs in $\pi^*$:
\begin{equation}
\label{eq5}
P_i \triangleq \sum_{j=1}^i p_j, \mbox{ for } i = 1, 2, \ldots, n.
\end{equation}

The exact algorithm {\sc DP-1} is a dynamic programming,
that sequentially assigns the job $J_{j+1}$ to the partial schedules on the first $j$ jobs $J_1, J_2, \ldots, J_j$;
each such partial schedule is described (hashed) as a quadruple $(j; \ell_j, e_j; Z_j)$,
where $a \le j \le j_3$,
no machine idling period in the earlier schedule,
%\footnote{Whenever a machine idling period is inserted into the earlier schedule of a partial schedule,
%	the partial schedule will be directly completed optimally into a full schedule.}
$\ell_j$ is the total processing time of the jobs in the earlier schedule,
$e_j$ is the index of the last job in the earlier schedule,
and $Z_j$ is the total weighted completion time of the jobs in the partial schedule.
Note that only one, though arbitrary, partial schedule is saved by {\sc DP-1} for each quadruple.

For ease of presentation, we partition the jobs into four subsequences:
\begin{equation}
\label{eq6}
\begin{array}{lll}
{\cal J}_1	&=	&(J_1, J_2, \ldots, J_{a-1}),\\
{\cal J}_2	&=	&(J_{a+1}, J_{a+2}, \ldots, J_{j_2-1}),\\
{\cal J}_3	&=	&(J_{j_2}, J_{j_2+1}, \ldots, J_{j_3}),\\
{\cal J}_4	&=	&(J_{j_3 + 1}, J_{j_3 + 2}, \ldots, J_n).
\end{array}
\end{equation}
From Lemma~\ref{lemma01} and Eq.~(\ref{eq4}), we know that ${\cal J}_1$ is a prefix of the earlier schedule and ${\cal J}_4$ is a suffix of the later schedule,
and thus {\sc DP-1} takes care of only the jobs of ${\cal J}_2 \cup {\cal J}_3$.
Let $Z({\cal J}_1)$ denote the total weighted completion time of the jobs in ${\cal J}_1$,
and $Z({\cal J}_4)$ denote the total weighted completion time of the jobs in ${\cal J}_4$ by starting the job processing at time $0$.

The (only) starting partial schedule for {\sc DP-1} is described as
\begin{equation}
\label{eq7}
(a; \ell_a, e_a; Z_a) = (a; P_{a-1}, a-1; Z({\cal J}_1) + w_a (T_2 + p_a)),
\end{equation}
in which the job $J_a$ starts processing at time $T_2$.
In general, given a quadruple $(j; \ell_j, e_j; Z_j)$ with $a \le j < j_3$, that represents a partial schedule on the first $j$ jobs,
{\sc DP-1} assigns the next job $J_{j+1}$ of ${\cal J}_2 \cup {\cal J}_3$ as follows
to generate at most three new partial schedules each described as a quadruple $(j+1; \ell_{j+1}, e_{j+1}; Z_{j+1})$.
Furthermore, if a non-empty machine idling period is inserted in the earlier schedule of a new partial schedule,
then the partial schedule is directly completed optimally to a full schedule using Lemma~\ref{lemma02}(d).

{\bf Case 1.} $J_{j+1}$ is added in the later schedule to obtain a partial schedule described as:
\begin{equation}
\label{eq8}
(j+1; \ell_{j+1}, e_{j+1}; Z_{j+1}) = (j+1; \ell_j, e_j; Z_j + w_{j+1} (T_2 + P_{j+1} - \ell_{j})),
\end{equation}
in which $C_{j+1} = T_2 + P_{j+1} - \ell_{j}$.
The feasibility holds since $\Delta_{j+1} \le \Delta_a$ by Lemma~\ref{lemma02}(g).

{\bf Case 2.} If $J_{j+1}$ can fit in the earlier schedule, that is $\ell_j + p_{j+1} \le T_1$ and $C_{j+1}(\pi^*) - (\ell_j + p_{j+1}) \le k$,
then we add $J_{j+1}$ in the earlier schedule without inserting a machine idling period to obtain a feasible partial schedule described as:
\begin{equation}
\label{eq9}
(j+1; \ell_{j+1}, e_{j+1}; Z_{j+1}) = (j+1; \ell_j + p_{j+1}, j+1; Z_j + w_{j+1} (\ell_j + p_{j+1})),
\end{equation}
in which $C_{j+1} = \ell_j + p_{j+1}$ and $\Delta_{j+1} = C_{j+1}(\pi^*) - (\ell_j + p_{j+1})$.
From Corollary~\ref{coro03}, for $J_{j+1} \in {\cal J}_2$ we have $\Delta_{j+1} < \Delta_a$ and thus no need for checking the second inequality.

{\bf Case 3.} If $\ell_j + p_{j+1} < T_1$ and $C_{j+1}(\pi^*) - (\ell_j + p_{j+1}) > \max\{\Delta_{e_j}, \Delta_a\}$,
then we add $J_{j+1}$ in the earlier schedule and insert a non-empty machine idling period right before it.
The following Lemma~\ref{lemma04} states that the exact length of the machine idling period can be determined in $O(n-j)$-time,
and during the same time, the partial schedule is directly optimally completed into a full schedule using Lemma~\ref{lemma02}(d).

Lastly, for every quadruple $(j; \ell_j, e_j; Z_j)$ with $j = j_3$ representing a partial schedule on the first $j_3$ jobs,
the algorithm {\sc DP-1} completes it by assigning all the jobs of ${\cal J}_4$ in the later schedule, starting at time $T_2 + P_j - \ell_j$,
to obtain a full schedule described as the quadruple
\begin{equation}
\label{eq10}
(n; \ell_n, e_n; Z_n) =
\left(n; \ell_j, e_j; Z_j + Z({\cal J}_4) + \left(\sum_{i = j+1}^n w_i\right) (T_2 + P_j - \ell_j)\right),
\end{equation}
for which the maximum time deviation is $\Delta_{\max} = \max\{\Delta_{e_n}, \Delta_a\}$ and the objective function value is
\begin{equation}
\label{eq11}
\hat{Z}_n \triangleq \mu\max\{\Delta_{e_n}, \Delta_a\} + Z_n.
\end{equation}

\begin{lemma}
\label{lemma04}
When the job $J_{j+1}$ is added in the earlier schedule of a partial schedule described as $(j; \ell_j, e_j; Z_j)$
and a non-empty machine idling period is inserted before it, then
\begin{description}
\parskip=0pt
\item[(a)]
	the minimum possible length of this period is $\max\{1, C_{j+1}(\pi^*) - (\ell_j + p_{j+1}) - k\}$;
\item[(b)]
	the maximum possible length of this period is $\min\{C_{j+1}(\pi^*) - (\ell_j + p_{j+1}) - \max\{\Delta_{e_j}, \Delta_a\}$,
	$T_1 - (\ell_j + p_{j+1})\}$;
\item[(c)]
	the exact length can be determined in $O(n-j)$-time;
\item[(d)]
	during the same time an optimal constrained full schedule is achieved directly.
\end{description}
\end{lemma}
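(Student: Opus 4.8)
The plan is to treat the length $x\ge 1$ of the inserted idle period as the sole free parameter, to read off its feasible range directly from the structural lemmas, and then to show that once the block of jobs placed after the idle period is fixed the objective is affine in $x$, so that an optimal completion is obtained by testing the two endpoints of the range for each of the $O(n-j)$ admissible blocks.

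\emph{Parts (a) and (b).} Writing the idle length as $x$, we have $C_{j+1}(\sigma)=\ell_j+x+p_{j+1}$, and by Lemma~\ref{lemma02}(a) the deviation of $J_{j+1}$ is $\Delta_{j+1}=C_{j+1}(\pi^*)-(\ell_j+x+p_{j+1})\ge 0$. For (a): a non-empty period forces $x\ge 1$, while feasibility $\Delta_{j+1}\le k$ forces $x\ge C_{j+1}(\pi^*)-(\ell_j+p_{j+1})-k$, and the larger of the two is the minimum (the entry condition of Case~3, together with feasibility of the partial schedule, is what guarantees the resulting interval is non-empty). For (b): $J_{j+1}$ must still finish by $T_1$, giving $x\le T_1-(\ell_j+p_{j+1})$; and by Lemma~\ref{lemma02}(c) the job right after the idle period is processed exactly $\Delta_{\max}$ units earlier than in $\pi^*$, so $\Delta_{j+1}=\Delta_{\max}\ge\max\{\Delta_{e_j},\Delta_a\}$ --- otherwise the idle period could be shortened without raising $\Delta_{\max}$ while strictly lowering the total weighted completion time --- which rearranges to $x\le C_{j+1}(\pi^*)-(\ell_j+p_{j+1})-\max\{\Delta_{e_j},\Delta_a\}$; the smaller of the two bounds gives (b).

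\emph{Parts (c) and (d).} Using Lemma~\ref{lemma01} together with Lemma~\ref{lemma02}(d),(e), every admissible completion has the following shape: the jobs of the earlier schedule after the idle period form a consecutive block $J_{j+1},J_{j+2},\ldots,J_{j+t}$ for some $t\ge 1$ (requiring $j+1\ge j_2$ and $\ell_j+x+\sum_{s=1}^{t}p_{j+s}\le T_1$), and all remaining jobs $J_{j+t+1},\ldots,J_n$ are appended, in $\pi^*$ order, to the later schedule. Since the later schedule starts at time $T_2$ irrespective of $x$, the completion times of every later-schedule job, and hence their entire contribution to the objective, are independent of $x$; increasing $x$ by one unit raises the completion times of exactly $J_{j+1},\ldots,J_{j+t}$ by one unit each and lowers $\Delta_{\max}=\Delta_{j+1}$ by one unit (valid over the whole range pinned down by (b)), so on the interval determined by (a) and (b) the objective is affine in $x$ with slope $\bigl(\sum_{s=1}^{t}w_{j+s}\bigr)-\mu$. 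An optimal $x$ for a given block is therefore attained at an endpoint of that interval, so it suffices to scan $t=1,2,\ldots$ as far as the block still fits before $T_1$, maintaining the running quantities $\sum_{s}p_{j+s}$, $\sum_{s}w_{j+s}$ and the induced later-schedule cost incrementally in $O(1)$ time per step, evaluating the two endpoints for each $t$, and keeping the best full schedule found. This takes $O(n-j)$ time in total, and because the scan enumerates every completion compatible with Lemmas~\ref{lemma01} and~\ref{lemma02}, the schedule returned is optimal among them; this establishes (c) and (d).

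The main obstacle I anticipate is the bookkeeping in the last paragraph: one must verify carefully that for each block length $t$ the induced later schedule (and its cost) really is independent of $x$, that the slope of the objective in $x$ is exactly $\sum_{s=1}^{t}w_{j+s}-\mu$ over the full admissible interval --- in particular that $\Delta_{j+1}$ stays the overall maximum deviation there --- and that the running sums needed to process each $t$ in constant time can indeed be updated incrementally; by contrast, (a) and (b) are immediate once the structural lemmas are invoked.
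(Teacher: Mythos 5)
Your proposal is correct and follows essentially the same route as the paper: parts (a) and (b) come from the same feasibility bound $\Delta_{j+1}\le k$, the $T_1$ fit condition, and the structural lemmas forcing $\Delta_{j+1}\ge\max\{\Delta_{e_j},\Delta_a\}$, while for (c) and (d) your enumeration over block lengths $t$ with endpoint evaluation is exactly the paper's partition of $[L,U]$ into $O(n-j)$ sub-intervals on which the earlier-schedule job set is fixed and the objective is affine in the idle length. The only cosmetic difference is that you make the slope $\bigl(\sum_{s}w_{j+s}\bigr)-\mu$ explicit and index the pieces by $t$ rather than by the idle-length sub-intervals, which changes nothing substantive.
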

\begin{proof}
$1$ is a trivial lower bound given that all job processing times are positive integers.
Next, if $C_{j+1}(\pi^*) - (\ell_j + p_{j+1}) > k$,
then we cannot process $J_{j+1}$ immediately at time $\ell_j$ as this would result in a time deviation greater than the given upper bound $k$;
and the earliest possible starting time is $C_{j+1}(\pi^*) - p_{j+1} - k$, leaving a machine idling period of length $C_{j+1}(\pi^*) - (\ell_j + p_{j+1}) - k$.
This proves item (a).

The maximum possible length for the period is no more than $C_{j+1}(\pi^*) - (\ell_j + p_{j+1}) - \max\{\Delta_{e_j}, \Delta_a\}$,
for otherwise $\Delta_{j+1}$ would be less than $\max\{\Delta_{e_j}, \Delta_a\}$.
However, $\Delta_{j+1} < \Delta_{e_j}$ contradicts Lemma~\ref{lemma01} and Lemma~\ref{lemma02}(a)
that suggest the jobs in the earlier schedule should have non-decreasing time deviations;
$\Delta_{j+1} < \Delta_a$ contradicts Lemma~\ref{lemma02}(c) that $\Delta_{\max} \ge \Delta_a$.
Also, since $J_{j+1}$ is added to the earlier schedule, the length of the idling period has to be at most $T_1 - (\ell_j + p_{j+1})$.
This proves item (b).

Let $L = \max\{1, C_{j+1}(\pi^*) - (\ell_j + p_{j+1}) - k\}$ and
$U = \min\{C_{j+1}(\pi^*) - (\ell_j + p_{j+1}) - \max\{\Delta_{e_j}, \Delta_a\}, T_1 - (\ell_j + p_{j+1})\}$. 
For each value $i \in [L, U]$, we
1) start processing $J_{j+1}$ at time $\ell_j + i$,
2) then continuously process succeeding jobs in the earlier schedule until the one won't fit in,
and 3) lastly process all the remaining jobs in the later schedule.
This gives a full schedule, denoted as $\pi^i$, with the total weighted completion time denoted $Z^i_n$,
and the maximum time deviation $\Delta^i_{\max} = \Delta_{j+1} = C_{j+1}(\pi^*) - (\ell_j + p_{j+1}) - i$.
Its objective function value is $\hat{Z}^i_n \triangleq \mu\Delta^i_{\max} + Z^i_n$.

It follows that the interval $[L, U]$ can be partitioned into $O(n-j)$ sub-intervals,
such that for all values in a sub-interval say $[i_1, i_2]$, the jobs assigned to the earlier schedule are identical;
and consequently the objective function value $\hat{Z}^i_n$ is a linear function in $i$, where $i_1 \le i \le i_2$.
It follows that among all these full schedules, the minimum objective function value must be achieved at one of $\hat{Z}^{i_1}_n$ and $\hat{Z}^{i_2}_n$.
That is, we in fact do not need to compute the full schedules $\pi^i$'s with those $i$'s such that $i_1 < i < i_2$,
and consequently there are only $O(n-j)$ full schedules to be computed.

These sub-intervals can be determined as follows:
when $i = L$, let the jobs fit into the earlier schedule after the machine idling period be $J_{j+1}, J_{j+2}, \ldots, J_{j+s}$,
then the first sub-interval is $[L, L + T_1 - C_{j+s}]$, where $C_{j+s}$ is the completion time of $J_{j+s}$ in the full schedule $\pi^i$;
the second sub-interval is $[L + T_1 - C_{j+s} + 1, L + T_1 - C_{j+s-1}]$;
the third sub-interval is $[L + T_1 - C_{j+s-1} + 1, L + T_1 - C_{j+s-2}]$;
and so on,
until the last interval hits the upper bound $U$.

The optimal length of the machine idling period is the one $i^*$ that minimizes $\hat{Z}^i_n$, among the $O(n-j)$ computed full schedules,
and we obtain a corresponding constrained optimal full schedule directly from the partial schedule described as the quadruple $(j; \ell_j, e_j; Z_j)$.
\end{proof}

\begin{theorem}
\label{thm05}
The algorithm {\sc DP-1} solves the rescheduling problem $(1, h_1 \mid \Delta_{\max}\le k \mid \mu\Delta_{\max} + \sum_{j=1}^n w_j C_j)$,
with its running time in $O(n^3 T_1 w_{\max} (T_2 + P))$,
where $n$ is the number of jobs, $[T_1, T_2]$ is the machine unavailable time interval, $P$ is the total job processing time,
and $w_{\max}$ is the maximum weight of the jobs.
\end{theorem}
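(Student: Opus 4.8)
The plan is to prove two things: that {\sc DP-1} returns a schedule of minimum objective value, and that it terminates within the stated time bound. For correctness I would rely entirely on the structural results already established --- Lemma~\ref{lemma01}, Lemma~\ref{lemma02}, Corollary~\ref{coro03} and Lemma~\ref{lemma04} --- which guarantee an optimal schedule $\sigma^*$ whose earlier and later portions are each in the $\pi^*$ (WSPT) order, whose earlier portion satisfies $C_j(\sigma^*)\le C_j(\pi^*)$, and which contains at most one genuine machine idle period, sitting immediately before a job that starts at or after $T_2$ in $\pi^*$ and with which the remaining earlier jobs are contiguous in $\pi^*$ and shifted by exactly $\Delta_{\max}$. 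First I would fix the guess $J_a$ for the first job of the later schedule of $\sigma^*$; since $\Delta_a=T_2-S_a(\pi^*)\le k$ and $J_a\in\{J_1,\ldots,J_{j_1}\}$, one of the $O(n)$ guesses made by {\sc DP-1} is correct, and for that guess ${\cal J}_1=(J_1,\ldots,J_{a-1})$ is forced to be the prefix of the earlier schedule and ${\cal J}_4$ the suffix of the later schedule, so only the jobs of ${\cal J}_2\cup{\cal J}_3$ remain to be placed.

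The core of the correctness proof is to show that the quadruple $(j;\ell_j,e_j;Z_j)$ is a \emph{sufficient statistic} for the remaining decisions: once $J_a$ is fixed and $J_1,\ldots,J_j$ have been placed with no idle period (as {\sc DP-1} always maintains), the collection of feasible completions over $J_{j+1},\ldots,J_n$ and the extra cost each one contributes depend on the partial schedule only through $\ell_j$ (which governs what still fits before $T_1$ and the deviations it creates), through $e_j$ together with $\Delta_a$ (which give the current maximum deviation on the earlier side, via $\Delta_{e_j}=C_{e_j}(\pi^*)-\ell_j$), and through the accumulated weighted completion time $Z_j$ --- and this is precisely what the transitions of Cases~1--3 and formulas~(\ref{eq8})--(\ref{eq11}) compute. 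Consequently, retaining only one representative partial schedule per quadruple cannot increase the optimum. I would then argue by induction on $j$ that, for the correct guess $J_a$, {\sc DP-1}'s table holds a partial schedule with the same quadruple as $\sigma^*$ restricted to $J_1,\ldots,J_j$, for all $j$ up to either $j_3$ (if $\sigma^*$ has no genuine idle period) or the index $m-1$ preceding $\sigma^*$'s idle period; the base case is the starting quadruple~(\ref{eq7}), and each inductive step is exactly Case~1 or Case~2 depending on whether $\sigma^*$ puts $J_{j+1}$ on the later or earlier side. In the idle-free case, formula~(\ref{eq10}) then completes the schedule with objective value at most that of $\sigma^*$; in the case with an idle period, Lemma~\ref{lemma02}(c)--(e) shows the precondition of Case~3 is met at the quadruple reached after placing $J_1,\ldots,J_{m-1}$ (indeed $C_m(\pi^*)-(\ell_{m-1}+p_m)>\Delta_{\max}\ge\max\{\Delta_{e_{m-1}},\Delta_a\}$ and $\ell_{m-1}+p_m<T_1$), and Lemma~\ref{lemma04}(c)--(d) guarantees {\sc DP-1} computes a completion of objective value no larger than that of $\sigma^*$. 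Since every schedule {\sc DP-1} builds is feasible by construction (the case conditions enforce $\Delta_{\max}\le k$ and no processing inside $[T_1,T_2]$), the minimum-value full schedule it records has objective value equal to that of $\sigma^*$, hence is optimal.

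For the running time I would bound the table size and the work per entry. There are $O(n)$ guesses for $J_a$; for a fixed guess, a quadruple is indexed by $j\in\{a,\ldots,j_3\}$ ($O(n)$ values), by $\ell_j\in\{0,1,\ldots,T_1\}$ ($O(T_1)$ values), by $e_j$ a job index ($O(n)$ values), and by $Z_j$, a nonnegative integer at most $W\,(T_2+P)=O(n\,w_{\max}(T_2+P))$ because every completion time is bounded by the makespan $T_2+P$. Generating the at most three successors of a quadruple via Cases~1--2 and formula~(\ref{eq10}) is $O(1)$ work, while each time Case~3 fires the completion of Lemma~\ref{lemma04} costs $O(n)$. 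Combining the number of distinct quadruples with this per-entry cost and then summing over the $O(n)$ guesses for $J_a$ yields the claimed $O(n^3T_1w_{\max}(T_2+P))$; the only point needing care is the bookkeeping that shows these polynomial factors collapse to that product rather than a larger one, in particular charging the $O(n)$-time Case~3 completions against the table size.

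The step I expect to be the main obstacle is the sufficient-statistic claim together with the matching of $\sigma^*$'s idle period to Case~3. One must check carefully that no data beyond $(j;\ell_j,e_j;Z_j)$ and the fixed $J_a$ is ever needed to test feasibility of a later placement --- in particular that the deviations of the already-scheduled later jobs never again bind, which is exactly where Lemma~\ref{lemma02}(g) and Corollary~\ref{coro03} are invoked --- and that whenever the optimal structure of $\sigma^*$ forces an idle period, that period lies precisely where Case~3 searches for it, so that Lemma~\ref{lemma04} recovers an equally good completion. The rest is direct substitution into formulas~(\ref{eq7})--(\ref{eq11}).
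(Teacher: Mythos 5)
Your correctness argument is the paper's own, spelled out: the paper dismisses optimality in one sentence (``the optimality lies in the quadruple representation for the partial schedules and the recurrences in Eqs.~(\ref{eq7}--\ref{eq10})''), and your sufficient-statistic claim plus the induction that matches each placement of $\sigma^*$ to Case~1 or~2, with Lemma~\ref{lemma02}(c)--(e) triggering Case~3 at the idle period and Lemma~\ref{lemma04}(c)--(d) supplying the constrained-optimal completion, is exactly the elaboration that sentence presupposes. No objection to that half.

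The gap is in the running time, and you flag it yourself without closing it. By your own (safe) bound $Z_j \le W(T_2+P) = O(n\,w_{\max}(T_2+P))$, the table has, per guess of $J_a$, $O(n)\cdot O(T_1)\cdot O(n)\cdot O(n\,w_{\max}(T_2+P))$ quadruples, i.e.\ $O(n^4 T_1 w_{\max}(T_2+P))$ states over all $O(n)$ guesses --- one factor of $n$ above the claimed bound --- and on top of that each Case~3 firing costs $O(n)$ by Lemma~\ref{lemma04}(c), which you propose to ``charge against the table size'' but never do. Saying that ``the bookkeeping shows these factors collapse'' is precisely the step a proof of Theorem~\ref{thm05} must supply. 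The paper arrives at $O(n^3 T_1 w_{\max}(T_2+P))$ by counting $Z_j$ as ranging over only $O(w_{\max}(T_2+P))$ values and charging $O(1)$ per generated quadruple; you neither adopt that accounting nor substitute an alternative (for instance, dropping $Z_j$ as a table dimension and retaining only the minimum $Z_j$ per triple $(j;\ell_j,e_j)$, which is legitimate because feasibility tests and all future cost increments --- including the Case~3 completion and Eqs.~(\ref{eq10})--(\ref{eq11}) --- depend on the partial schedule only through $(j,\ell_j,e_j)$ and are additive in $Z_j$, and then bounding the total work of the Case~3 completions separately). As written, the complexity statement of the theorem is asserted rather than derived, so the proposal is incomplete on this point.
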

\begin{proof}
In the above, we presented the dynamic programming algorithm {\sc DP-1} to compute a full schedule for each quadruple $(n; \ell_n, e_n; Z_n)$
satisfying Lemmas~\ref{lemma01} and \ref{lemma02},
under the constraint that the job $J_a$ starts processing at time $T_2$.
The final output schedule is the one with the minimum $\hat{Z}_n$, among all the choices of $J_a$ such that $\Delta_a \le k$.
Its optimality lies in the quadruple representation for the partial schedules and the recurrences we developed in Eqs.~(\ref{eq7}--\ref{eq10}).
There are $O(j_1)$ choices for $J_a$, and for each $J_a$ we compute a partial schedule for each quadruple $(j; \ell_j, e_j; Z_j)$,
where $a \le j \le j_3$ or $j = n$, $P_{a-1} \le \ell_j \le T_1$, $a-1 \le e_j \le j$, and $Z_j \le \max_j w_j (T_2 + P)$.
Since one partial schedule leads to at most three other partial schedules, and each takes an $O(1)$-time to compute,
the overall running time is $O(n^3 T_1 w_{\max} (T_2 + P))$.
\end{proof}

\section{An FPTAS}
%===============================================================================
The route to the FPTAS for the rescheduling problem $(1, h_1 \mid \Delta_{\max}\le k \mid \mu\Delta_{\max} + \sum_{j=1}^n w_j C_j)$ is as follows:
we first consider the special case where $\mu = 0$, that is $(1, h_1 \mid \Delta_{\max}\le k \mid \sum_{j=1}^n w_j C_j)$,
where the maximum time deviation is only upper bounded by $k$ but not taken as part of the objective function,
and design another exact algorithm, denoted as {\sc DP-2}, using a different dynamic programming recurrence;
%\footnote{By Theorem~\ref{thm05}, the algorithm {\sc DP-1} for the general rescheduling problem already solves the special case.}
then we develop from the algorithm {\sc DP-2} an FPTAS for the special case;
lastly, we use the FPTAS for the special case multiple times to design an FPTAS for the general case.

\subsection{Another dynamic programming exact algorithm for $\mu = 0$}
%-------------------------------------------------------------------------------
In this special case, we have stronger conclusions on the target optimal schedule $\sigma^*$ than those stated in Lemma \ref{lemma02},
one of which is that if there is a machine idling period in the earlier schedule of $\sigma^*$, then $\Delta_{\max} = k$.
This follows from the fact that we may start processing the jobs after the idling period one time unit earlier, if $\Delta_{\max} < k$,
to decrease the total weighted completion time.
We conclude this in the following lemma.

\begin{lemma}
\label{lemma06}
There exists an optimal schedule $\sigma^*$ for the rescheduling problem $(1, h_1 \mid \Delta_{\max}\le k \mid \sum_{j=1}^n w_j C_j)$,
in which if the machine idles in the earlier schedule then $\Delta_{\max} = k$.
\end{lemma}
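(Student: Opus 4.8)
The plan is to start from an optimal schedule $\sigma^*$ that already satisfies Lemmas~\ref{lemma01} and \ref{lemma02} (such a schedule exists by Lemma~\ref{lemma02}, and these properties carry over verbatim to the special case $\mu = 0$), and to show that if $\sigma^*$ contains a machine idling period in the earlier schedule while $\Delta_{\max} < k$, then $\sigma^*$ cannot be optimal. First I would recall the structure we may assume: by Lemma~\ref{lemma02}(b) there is at most one idling period in the earlier schedule; by Lemma~\ref{lemma02}(d) the jobs processed after it form a block $J_b, J_{b+1}, \ldots, J_c$ that is consecutive in $\pi^*$, and this block is non-empty because the idling under discussion is the kind deliberately inserted before a job (as distinguished in Section~\ref{sp}, and consistent with Lemma~\ref{lemma02}(e)); and by Lemma~\ref{lemma02}(c) every job $J_j$ in this block satisfies $C_j(\sigma^*) = C_j(\pi^*) - \Delta_{\max}$, hence $\Delta_j = \Delta_{\max}$. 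Moreover, by Lemma~\ref{lemma01} together with Lemma~\ref{lemma02}(a) the time deviations of the earlier-schedule jobs preceding the idling period are non-decreasing and hence at most $\Delta_{\max}$, while every job in the later schedule has deviation at most $\Delta_{\max}$ by definition of $\Delta_{\max}$.

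Next I would carry out the local surgery announced in the text. Assuming $\Delta_{\max} < k$, and hence $\Delta_{\max} \le k-1$ since all data are integers, I shorten the idling period by exactly one time unit, shifting the whole block $J_b, \ldots, J_c$ one unit to the left and leaving the prefix of the earlier schedule and the entire later schedule untouched. I then check feasibility of the resulting schedule: the shifted block now completes at or before $T_1 - 1 < T_1$, so it stays within the earlier schedule and still avoids $[T_1, T_2]$; the shortened idling period has length at least $0$; the deviation of each shifted job becomes $\Delta_{\max} + 1 \le k$, whereas every other deviation is unchanged and at most $\Delta_{\max} \le k - 1$, so the maximum time deviation remains at most $k$. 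Since $\mu = 0$, the objective equals $\sum_{j=1}^n w_j C_j$, which strictly decreases by $\sum_{j=b}^{c} w_j > 0$ under this modification. This contradicts the optimality of $\sigma^*$, so any optimal schedule satisfying Lemmas~\ref{lemma01} and \ref{lemma02} that idles in the earlier schedule must have $\Delta_{\max} = k$, establishing the lemma.

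I do not expect a genuine obstacle here: the argument is a single exchange-type step, and the only place that requires care is verifying that the leftward shift preserves feasibility everywhere — specifically that the block remains inside the earlier schedule and that no deviation exceeds $k$ — which is precisely where the hypothesis $\Delta_{\max} < k$ and the integrality of the processing times, weights, $T_1$, $T_2$, and $k$ are used. Everything else is bookkeeping on the structural facts supplied by Lemmas~\ref{lemma01} and~\ref{lemma02}.
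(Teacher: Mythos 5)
Your proposal is correct and follows essentially the same argument the paper gives: if the earlier schedule contains a (deliberately inserted) idle period while $\Delta_{\max} < k$, shift the block after the idle period one time unit earlier, which remains feasible by integrality ($\Delta_{\max}+1 \le k$) and strictly decreases $\sum_j w_j C_j$, contradicting optimality. The paper states this shift argument in one sentence just before the lemma; your write-up merely fills in the same feasibility bookkeeping.
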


The new exact algorithm {\sc DP-2} heavily relies on this conclusion (and thus it does not work for the general case).
{\sc DP-2} has a running time complexity worse than {\sc DP-1},
%\footnote{In fact, {\sc DP-2} can be an order faster than {\sc DP-1} by dropping the third entry $\ell_j^2$, resulting in $O(n^2 T_1 w_{\max} (T_2 + P))$-time.}
but it can be readily developed into an FPTAS.

The framework of the new algorithm {\sc DP-2} is the same, and we continue to use the notations defined in Eqs.~(\ref{eq4}--\ref{eq6}).
But now we use a new quadruple $(j; \ell_j^1, \ell_j^2; Z_j)$ to describe a partial schedule on the first $j$ jobs,
where $a \le j \le j_3$,
no machine idling period in the earlier schedule,
%\footnote{Again, whenever a machine idling period is inserted into the earlier schedule of a partial schedule,
%	the partial schedule will be directly completed optimally into a full schedule.}
$\ell_j^1$ is the maximum job completion time in the earlier schedule,
$\ell_j^2$ is the maximum job completion time in the later schedule,
and $Z_j$ is the total weighted completion time of the jobs in the partial schedule.

Starting with guessing $J_a$ (from the pool $J_1, J_2, \ldots, J_{j_1}$) to be the job started processing at time $T_2$,
such that $\Delta_a = T_2 - S_a(\pi^*) \le k$,
the (only) corresponding partial schedule is described as the quadruple
\begin{equation}
\label{eq12}
(a; \ell_a^1, \ell_a^2; Z_a) = (a; P_{a-1}, T_2 + p_a; Z({\cal J}_1) + w_a (T_2 + p_a)).
\end{equation}
In general, given a quadruple $(j; \ell_j^1, \ell_j^2; Z_j)$ with $a \le j < j_3$ representing a partial schedule on the first $j$ jobs,
the algorithm {\sc DP-2} assigns the next job $J_{j+1}$ of ${\cal J}_2 \cup {\cal J}_3$ as follows
to generate at most three new partial schedules each described as a quadruple $(j+1; \ell_{j+1}^1, \ell_{j+1}^2; Z_{j+1})$.
Furthermore, if a non-empty machine idling period is inserted in the earlier schedule of a new partial schedule,
then it is directly completed optimally into a full schedule using Lemma~\ref{lemma02}(d).

{\bf Case 1.} $J_{j+1}$ is added in the later schedule to obtain a partial schedule described as:
\begin{equation}
\label{eq13}
(j+1; \ell_{j+1}^1, \ell_{j+1}^2; Z_{j+1}) = (j+1; \ell_j^1, \ell_j^2 + p_{j+1}; Z_j + w_{j+1} (\ell_j^2 + p_{j+1})),
\end{equation}
in which $C_{j+1} = \ell_j^2 + p_{j+1}$.
The feasibility holds since $\Delta_{j+1} \le \Delta_a$ by Lemma~\ref{lemma02}(g).

{\bf Case 2.} If $J_{j+1}$ can fit in the earlier schedule, that is $\ell_j^1 + p_{j+1} \le T_1$ and $C_{j+1}(\pi^*) - (\ell_j^1 + p_{j+1}) \le k$,
then we add $J_{j+1}$ in the earlier schedule without inserting a machine idling period to obtain a feasible partial schedule described as:
\begin{equation}
\label{eq14}
(j+1; \ell_{j+1}^1, \ell_{j+1}^2; Z_{j+1}) = (j+1; \ell_j^1 + p_{j+1}, \ell_j^2; Z_j + w_{j+1} (\ell_j^1 + p_{j+1})),
\end{equation}
in which $C_{j+1} = \ell_j^1 + p_{j+1}$ and $\Delta_{j+1} = C_{j+1}(\pi^*) - (\ell_j^1 + p_{j+1})$.
From Corollary~\ref{coro03}, for $J_{j+1} \in {\cal J}_2$ we have $\Delta_{j+1} < \Delta_a$ and thus no need for checking the second inequality.

{\bf Case 3.} If $\ell_j^1 + p_{j+1} < C_{j+1}(\pi^*) - k \le T_1$,
then we add $J_{j+1}$ in the earlier schedule and insert a non-empty machine idling period of length $C_{j+1}(\pi^*) - (\ell_j^1 + p_{j+1}) - k$ right before it.
(That is, start processing $J_{j+1}$ at time $C_{j+1}(\pi^*) - p_{j+1} - k$.)
Then continuously process succeeding jobs up to $J_{j_3}$ in the earlier schedule,
and lastly process all the remaining jobs in the later schedule.
This gives a full schedule with the maximum time deviation $\Delta_{\max} = k$, described as the quadruple $(n; \ell_n^1, \ell_n^2; Z_n)$, where
\[
\ell_n^1 = C_{j+1}(\pi^*) - k + \sum_{i = j+1}^{j_3} p_i, \quad
\ell_n^2 = \ell_j^2 + \sum_{i = j_3+1}^n p_i, \mbox{ and}
\]
\[
Z_n = Z_j + Z(\{J_{j+1}, \ldots, J_{j_3}\}) + \left(\sum_{i = j+1}^{j_3} w_i\right) (C_{j+1}(\pi^*) - p_{j+1} - k)
	+ Z({\cal J}_4) + \left(\sum_{i = j+1}^n w_i\right) \ell_j^2,
\]
where $Z(\{J_{j+1}, \ldots, J_{j_3}\})$ denotes the total weighted completion time
of the jobs in $\{J_{j+1}, \ldots$, $J_{j_3}\}$ by starting the job processing at time $0$.

Lastly, for every quadruple $(j; \ell_j^1, \ell_j^2; Z_j)$ with $j = j_3$ representing a partial schedule on the first $j_3$ jobs,
the algorithm {\sc DP-2} completes it by assigning all the jobs of ${\cal J}_4$ in the later schedule, starting at time $\ell_j^2$,
to obtain a full schedule described as the quadruple
\begin{equation}
\label{eq15}
(n; \ell_n^1, \ell_n^2; Z_n) =
\left(n; \ell_j^1, \ell_j^2 + \sum_{i = j+1}^n p_i; Z_j + Z({\cal J}_4) + \left(\sum_{i = j+1}^n w_i\right) \ell_j^2\right),
\end{equation}
for which the maximum time deviation is guaranteed to be no more than $k$.

\begin{theorem}
\label{thm07}
The algorithm {\sc DP-2} solves the rescheduling problem $(1, h_1 \mid \Delta_{\max}\le k \mid \sum_{j=1}^n w_j C_j)$ in $O(n^2 T_1 w_{\max} (T_2 + P)^2)$-time,
where $n$ is the number of jobs, $[T_1, T_2]$ is the machine unavailable time interval, $P$ is the total job processing time,
and $w_{\max}$ is the maximum weight of the jobs.
\end{theorem}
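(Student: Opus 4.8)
The plan is to prove two things: (i) that {\sc DP-2} is correct, i.e.\ the schedule it outputs attains the optimal total weighted completion time, and (ii) the claimed time bound. Correctness is the substantive part; the time bound is a counting argument that mirrors the proof of Theorem~\ref{thm05}. For (i), the first step is to pin down the search space. By Lemmas~\ref{lemma01}, \ref{lemma02} and \ref{lemma06} there is an optimal schedule $\sigma^*$ in which the earlier and later schedules each keep the $\pi^*$-order, at most one machine idling period occurs in the earlier schedule, and \emph{if} such a period is present then $\Delta_{\max}=k$. By Lemma~\ref{lemma02}(g) the first job $J_a$ of the later schedule lies in $\{J_1, J_2, \ldots, J_{j_1}\}$ with $\Delta_a = T_2 - S_a(\pi^*) \le k$, and {\sc DP-2} enumerates exactly these candidates. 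Once $J_a$ is fixed, ${\cal J}_1$ is forced to be the prefix of the earlier schedule (Lemma~\ref{lemma01}), while ${\cal J}_4$ is forced to be the suffix of the later schedule, because by the definition of $j_3$ in Eq.~(\ref{eq4}) every $J_i$ with $i > j_3$ has $C_i(\pi^*) - k > T_1$ and hence cannot be completed by time $T_1$; thus only the jobs of ${\cal J}_2 \cup {\cal J}_3$ remain to be placed, which {\sc DP-2} decides one at a time in index order, automatically preserving the $\pi^*$-order inside each half.

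The core step is then to check that the three moves (Cases~1--3), together with the closing rules (Eq.~(\ref{eq15}) and the deterministic completion inside Case~3), generate for every such $\sigma^*$ a computation path ending at a quadruple that describes a full schedule with the same objective value, and that every generated quadruple is feasible with $\Delta_{\max}\le k$. Before the idling period, a remaining job is appended either to the later schedule (Case~1, always feasible since $\Delta_{j+1}\le\Delta_a\le k$ by Lemma~\ref{lemma02}(g)) or to the earlier schedule with no idling (Case~2, whose two tests are precisely the constraints $C_{j+1}\le T_1$ and $\Delta_{j+1}\le k$). If the unique idling period of $\sigma^*$ sits right before some job $J_{j+1}$ in the earlier schedule, then by Lemma~\ref{lemma02}(c) and Lemma~\ref{lemma06} that job is shifted by exactly $k$, so it starts at $C_{j+1}(\pi^*) - p_{j+1} - k > \ell_j^1$ --- which is exactly the trigger of Case~3 --- and from there Lemma~\ref{lemma02}(c,d) forces the rest of the earlier schedule to be the consecutive block $J_{j+1}, J_{j+2}, \ldots, J_{j_3}$, each shifted by $k$ (the block ends at $C_{j_3}(\pi^*) - k \le T_1$, and no later-indexed job can lie in the earlier schedule), with everything else appended to the later schedule; this is precisely what Case~3 records. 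If no idling period occurs, Eq.~(\ref{eq15}) appends ${\cal J}_4$ to the later schedule. Conversely, the same case analysis shows every quadruple produced is feasible. Finally, the quadruple $(j; \ell_j^1, \ell_j^2; Z_j)$ records everything on which the feasibility and cost of any completion depend --- the latest earlier-schedule completion time, the latest later-schedule completion time, and the weighted completion time accrued so far --- so any two partial schedules with the same quadruple are interchangeable and keeping only one of them is without loss; returning the minimum $Z_n$ over all generated full schedules and all choices of $J_a$ therefore yields an optimal schedule.

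For the running time, I would fix $J_a$ and count quadruples as in the proof of Theorem~\ref{thm05}: $j$ takes $O(n)$ values, $\ell_j^1 \in \{0, 1, \ldots, T_1\}$ takes $O(T_1)$ values, $\ell_j^2 \in \{T_2, T_2+1, \ldots, T_2+P\}$ takes $O(T_2+P)$ values, and the integral coordinate $Z_j$ contributes an $O(w_{\max}(T_2+P))$ factor, for a total of $O(n T_1 (T_2+P) w_{\max}(T_2+P))$ quadruples. Each spawns at most three successors, and after precomputing prefix sums of the $p_i$'s and $w_i$'s every move --- including the closed-form completions of Case~3 and Eq.~(\ref{eq15}) --- costs $O(1)$. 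Multiplying by the $O(n)$ choices of $J_a$ gives $O(n^2 T_1 w_{\max} (T_2+P)^2)$.

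I expect the main obstacle to be the correctness of Case~3: one must argue that inserting the unique idling period and then completing the schedule \emph{deterministically} loses no optimality. This is exactly where Lemma~\ref{lemma06} (idling $\Rightarrow \Delta_{\max}=k$) is indispensable, and it has to be combined carefully with Lemma~\ref{lemma02}(b)--(e) and the definition of $j_3$; in particular one must check that pushing the whole block $J_{j+1}, J_{j+2}, \ldots, J_{j_3}$ into the earlier schedule is feasible and at least as good as any other completion that keeps the idling period at that location, once $\Delta_{\max}=k$ is forced. By comparison, the completeness of the $J_a$-enumeration, the feasibility checks in Cases~1--2, and the counting step are all routine.
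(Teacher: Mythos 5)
Your proposal is correct and follows essentially the same route as the paper's own proof: optimality is argued from the structural Lemmas~\ref{lemma01}, \ref{lemma02} and \ref{lemma06} together with the recurrences of Eqs.~(\ref{eq12}--\ref{eq15}) and the one-schedule-per-quadruple hashing, and the running time comes from counting quadruples per choice of $J_a$; indeed your discussion of the deterministic completion in Case~3 is more explicit than the paper's, which simply appeals to the quadruple representation and the recurrences. The only soft spot, the claim that the $Z_j$-coordinate ranges over $O(w_{\max}(T_2+P))$ values (its true range is $O(W(T_2+P))$), is inherited verbatim from the paper's own counting, so your argument is no weaker than the published one.
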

\begin{proof}
In the above, we see that the dynamic programming algorithm {\sc DP-2} computes a full schedule for each quadruple $(n; \ell_n^1, \ell_n^2; Z_n)$
satisfying Lemmas~\ref{lemma01} and \ref{lemma02}
and an extra property that if there is a machine idling period in the earlier schedule then $\Delta_{\max} = k$,
under the constraint that the job $J_a$ starts processing at time $T_2$.
The final output schedule is the one with the minimum $Z_n$, among all the choices of $J_a$ such that $\Delta_a \le k$.
Similarly to the proof of Theorem~\ref{thm05},
the optimality lies in the quadruple representation for the partial schedules and the recurrences we developed in Eqs.~(\ref{eq12}--\ref{eq15}).
There are $O(j_1)$ choices for $J_a$, and for each $J_a$ we compute all partial schedules described as $(j; \ell_j^1, \ell_j^2; Z_j)$,
where $a \le j \le n$, $P_{a-1} \le \ell_j^1 \le T_1$, $T_2 + p_a \le \ell_j^2 \le T_2 + P$, and $Z_j \le \max_j w_j (T_2 + P)$.
Since one partial schedule leads to at most three other partial schedules, and each takes an $O(1)$-time to compute,
the overall running time is $O(n^2 T_1 w_{\max} (T_2 + P)^2)$.
\end{proof}

We remark that if the purpose is to compute an optimal schedule only,
the algorithm {\sc DP-2} can be an order faster than the exact algorithm {\sc DP-1} in Theorem~\ref{thm05},
by dropping the third entry $\ell_j^2$ to run in $O(n^2 T_1 w_{\max} (T_2 + P))$-time.

\subsection{An FPTAS for $\mu = 0$}
%-------------------------------------------------------------------------------
In this subsection, we convert the exact algorithm {\sc DP-2} in the last subsection into an FPTAS by the sparsing technique.
We assume the job $J_a$ is scheduled to start processing at time $T_2$, and we have a positive real value $\epsilon > 0$.
The algorithm is denoted as {\sc Approx$(a, \epsilon)$}, which guarantees to return a feasible schedule such that
its total weighted completion time is within $(1 + \epsilon)$ of the constrained optimum,
under the constraint that the job $J_a$ is scheduled to start processing at time $T_2$.

\ 

%\paragraph*{Step 1}
%-------------------------------------------------------------------------------
\noindent{\bf Step 1.} \ 
Set
\begin{equation}
\label{eq16}
\delta \triangleq (1 + \epsilon /2n).
\end{equation}
Note that a partial schedule is described as the quadruple $(j; \ell_j^1, \ell_j^2; Z_j)$,
where $a \le j \le n$,
$\ell_j^1 = 0$ or $p_{\min} \le \ell_j^1 \le T_1$,
$T_2 + p_a \le \ell_j^2 \le T_2 + P$,
and $Z({\cal J}_1) + w_a (T_2 + p_a) \le Z_j \le Z({\cal J}) + W T_2$,
where $W \triangleq \sum_j w_j$ is the total weight.
Let $L^1, U^1$ ($L^2, U^2$; $L^3, U^3$, respectively) denote the above lower and upper bounds on $\ell_j^1$ ($\ell_j^2$; $Z_j$, respectively);
let
\[
r_i \triangleq \lfloor \log_{\delta} (U^i /L^i)\rfloor, \mbox{ for } i = 1, 2, 3,
\]
and split the interval $[L^i, U^i]$ into subintervals
\[
I_1^i = [L^i, L^i \delta], \ I_2^i = (L^i \delta, L^i \delta^2], \ \ldots, \ I_{r_i}^i = (L^i \delta^{r_i}, U^i], \mbox{ for } i = 1, 2, 3.
\]

We define a three-dimensional box
\begin{equation}
\label{eq17}
B_{i_1, i_2, i_3} \triangleq I_{i_1}^1 \times I_{i_2}^2 \times I_{i_3}^3, \mbox{ for } (i_1, i_2, i_3) \in (\{0\} \cup [r_1]) \times [r_2] \times [r_3],
\end{equation}
where $I_0^1 \triangleq [0, 0]$ and $[r] \triangleq \{1, 2, \ldots, r\}$ for any positive integer $r$.
Each $j$ such that $a \le j \le j_3$ or $j = n$ is associated with a box $B_{i_1, i_2, i_3}$, denoted as $j$-$B_{i_1, i_2, i_3}$ for simplicity,
where $(i_1, i_2, i_3) \in (\{0\} \cup [r_1]) \times [r_2] \times [r_3]$;
all these boxes are initialized empty.

\ 

\noindent{\bf Step 2.} \ 
%-------------------------------------------------------------------------------
The starting partial schedule is described as $(a; \ell_a^1, \ell_a^2; Z_a)$ in Eq.~(\ref{eq12}),
which is then saved in the box $a$-$B_{i_1, i_2, i_3}$ where $\ell_a^1 \in I_{i_1}^1$, $\ell_a^2 \in I_{i_2}^2$, and $Z_a \in I_{i_3}^3$.

In general, for each non-empty box $j$-$B_{i_1, i_2, i_3}$ with $a \le j < j_3$,
denote the saved quadruple as $(j; \ell_j^1, \ell_j^2; Z_j)$, which describes a partial schedule on the first $j$ jobs such that
there is no machine idling period in the earlier schedule,
$\ell_j^1$ is the maximum job completion time in the earlier schedule,
$\ell_j^2$ is the maximum job completion time in the later schedule,
and $Z_j$ is the total weighted completion time of the jobs in the partial schedule.
The algorithm {\sc Approx$(a, \epsilon)$} performs the same as the exact algorithm {\sc DP-2} to assign the next job $J_{j+1}$ of ${\cal J}_2 \cup {\cal J}_3$
to generate at most three new partial schedules each described as a quadruple $(j+1; \ell_{j+1}^1, \ell_{j+1}^2; Z_{j+1})$.
Furthermore, if a non-empty machine idling period is inserted in the earlier schedule of a new partial schedule,
then it is directly completed optimally into a full schedule $(n; \ell_n^1, \ell_n^2; Z_n)$ using Lemma~\ref{lemma02}(d).
For each resultant $(j+1; \ell_{j+1}^1, \ell_{j+1}^2; Z_{j+1})$ (the same for $(n; \ell_n^1, \ell_n^2; Z_n)$)
the algorithm checks whether or not the box $(j+1)$-$B_{i_1, i_2, i_3}$,
where $\ell_{j+1}^1 \in I_{i_1}^1$, $\ell_{j+1}^2 \in I_{i_2}^2$, and $Z_{j+1} \in I_{i_3}^3$,
is empty or not;
if it is empty, then the quadruple is saved in the box,
otherwise the box is updated to save the one having a smaller $\ell_{j+1}^1$ between the old and the new quadruples.

\ 

\noindent{\bf Step 3.} \ 
%-------------------------------------------------------------------------------
For each box $n$-$B_{i_1, i_2, i_3}$, where $(i_1, i_2, i_3) \in (\{0\} \cup [r_1]) \times [r_2] \times [r_3]$,
if there is a saved quadruple $(n; \ell_n^1, \ell_n^2; Z_n)$, then there is a constrained full schedule with the total weighted completion time $Z_n$.
Here the constraint is that the job $J_a$ starts processing at time $T_2$.
The algorithm {\sc Approx}$(a, \epsilon)$ scans through all the boxes associated with $n$,
and returns the quadruple having the smallest total weighted completion time, denoted as $Z_n^a$.
The corresponding constrained full schedule can be backtracked.

\begin{theorem}
\label{thm08}
Algorithm {\sc Approx}$(a, \epsilon)$ is a $(1 + \epsilon)$-approximation for the rescheduling problem $(1, h_1 \mid \Delta_{\max}\le k \mid \sum_{j=1}^n w_j C_j)$
under the constraint that the $J_a$ starts processing at time $T_2$;
its time complexity is
$O\left(\frac{n^4}{\epsilon^3} \log T_1 \log(T_2 + P) \log(w_{\max}P + W T_2)\right)$.
\end{theorem}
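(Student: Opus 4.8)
The plan is to establish correctness (the $(1+\epsilon)$-approximation guarantee) and the running time separately, following the standard trimming-the-state-space argument for FPTASes derived from pseudo-polynomial dynamic programs.

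For the approximation guarantee, I would prove by induction on $j$ (from $j=a$ up to $j=j_3$, and then $j=n$) the following invariant: for every partial schedule $\rho$ produced by the exact algorithm {\sc DP-2} on the first $j$ jobs, described by the quadruple $(j;\ell_j^1,\ell_j^2;Z_j)$, there is a quadruple $(j;\tilde\ell_j^1,\tilde\ell_j^2;\tilde Z_j)$ saved by {\sc Approx}$(a,\epsilon)$ in some box $j$-$B_{i_1,i_2,i_3}$ with $\tilde\ell_j^1\le\ell_j^1$, $\tilde\ell_j^2\le\delta^{\,j-a+1}\ell_j^2$, and $\tilde Z_j\le\delta^{\,j-a+1}Z_j$. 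The base case is Eq.~(\ref{eq12}), where the unique starting quadruple is saved exactly (error factor $\delta^0=1$, or absorbed into the first box). For the inductive step, take the {\sc DP-2} transition (Case 1, 2, or 3) that extends $\rho$ by $J_{j+1}$; apply the same transition of {\sc Approx}$(a,\epsilon)$ to the saved quadruple $(j;\tilde\ell_j^1,\tilde\ell_j^2;\tilde Z_j)$. Since $\tilde\ell_j^1\le\ell_j^1$ the feasibility tests ($\tilde\ell_j^1+p_{j+1}\le T_1$, the time-deviation test) are still satisfied whenever they were for $\rho$ — and crucially the contribution of $J_{j+1}$ in each case is a sum of a quantity already within a $\delta^{j-a+1}$ factor and the exact nonnegative increments $p_{j+1}$ or $w_{j+1}(\cdots)$, so the new quadruple's error factor grows by at most one more power of $\delta$; then rounding into a box inflates each coordinate by at most another factor $\delta$, and the box-update rule only ever keeps a quadruple with a no-larger $\tilde\ell^1$, preserving the $\ell^1$-domination needed for feasibility downstream. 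Finally, applying the invariant at $j=j_3$ (or $j=n$) and noting $\delta^{\,n}=(1+\epsilon/2n)^n\le e^{\epsilon/2}\le 1+\epsilon$ for $\epsilon\le 1$ (and handling $\epsilon>1$ trivially), the returned $Z_n^a$ is at most $(1+\epsilon)$ times the constrained optimum. The one point requiring care is Case~3, where the partial schedule is completed directly into a full schedule via Lemma~\ref{lemma02}(d); here I would observe that the idling length $C_{j+1}(\pi^*)-(\tilde\ell_j^1+p_{j+1})-k$ is \emph{nondecreasing} as $\tilde\ell_j^1$ decreases, which keeps $\Delta_{\max}=k$ feasible and keeps all subsequent exact completion-time contributions within the same multiplicative error — this monotonicity is the main obstacle and the place where the specific structure of {\sc DP-2} (as opposed to {\sc DP-1}) is essential.

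For the running time, the number of non-empty boxes associated with each $j$ is at most $(r_1+2)(r_2+1)(r_3+1)$, where $r_i=\lfloor\log_\delta(U^i/L^i)\rfloor$. Using $\log_\delta x=\frac{\ln x}{\ln\delta}$ and $\ln\delta=\ln(1+\epsilon/2n)=\Theta(\epsilon/n)$, we get $r_1=O\!\left(\frac{n}{\epsilon}\log T_1\right)$, $r_2=O\!\left(\frac{n}{\epsilon}\log(T_2+P)\right)$, and $r_3=O\!\left(\frac{n}{\epsilon}\log(w_{\max}P+WT_2)\right)$, since the range of $Z_j$ is between $Z({\cal J}_1)+w_a(T_2+p_a)$ and $Z({\cal J})+WT_2=O(w_{\max}P+WT_2)$. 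Each of the $O(n)$ values of $j$ has $O(r_1 r_2 r_3)$ boxes, each box triggers $O(1)$ transitions of $O(1)$ time each (the Case~3 completion is charged as $O(1)$ amortized per box, or absorbed into the polynomial factors), so the total is $O\!\left(n\cdot r_1 r_2 r_3\right)=O\!\left(\frac{n^4}{\epsilon^3}\log T_1\,\log(T_2+P)\,\log(w_{\max}P+WT_2)\right)$, matching the claimed bound. I would close by noting that Step~3's final scan over all $n$-boxes is dominated by this.
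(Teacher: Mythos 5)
Your overall strategy coincides with the paper's: an induction that tracks, stage by stage, a saved quadruple that dominates the exact {\sc DP-2} state (exactly in $\ell^1$ thanks to the box-update rule, within a per-stage factor $\delta$ in $\ell^2$ and $Z$), followed by $\delta^n=(1+\epsilon/2n)^n\le 1+\epsilon$ and the count $O(n\,r_1r_2r_3)$ with $\log\delta\ge\epsilon/4n$, which is exactly the paper's running-time argument. The difference is that the paper runs the induction only along the quadruple path leading to the constrained optimum, while you quantify over every {\sc DP-2} partial schedule; that is harmless in itself.

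The genuine gap is in your inductive step for Case~2. You assert that because $\tilde\ell_j^1\le\ell_j^1$, ``the feasibility tests ($\tilde\ell_j^1+p_{j+1}\le T_1$, the time-deviation test) are still satisfied whenever they were for $\rho$.'' That is true for the capacity test but false for the deviation test: the deviation incurred by putting $J_{j+1}$ in the earlier schedule is $C_{j+1}(\pi^*)-(\tilde\ell_j^1+p_{j+1})$, which \emph{increases} as $\tilde\ell_j^1$ decreases and may exceed $k$ even though it did not for $\rho$ (or for the optimal path). In that event the saved quadruple cannot be extended by the same Case~2 transition, and your invariant --- existence of a dominating saved quadruple at stage $j+1$ --- simply fails. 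The repair (which the paper also leaves implicit behind its ``similarly, in Case 2 \ldots we can show'') is to observe that in this situation the Case~3 precondition $\tilde\ell_j^1+p_{j+1}<C_{j+1}(\pi^*)-k\le T_1$ holds automatically, the second inequality being inherited from the optimal path's Case~2 test $C_{j+1}(\pi^*)-(\ell_j^{1*}+p_{j+1})\le k$ and $\ell_j^{1*}+p_{j+1}\le T_1$; {\sc Approx}$(a,\epsilon)$ then completes the saved quadruple directly into a full schedule in which every job $J_i$, $j+1\le i\le j_3$, completes at $C_i(\pi^*)-k$, i.e.\ no later than in any feasible continuation of the optimal path, while the jobs of ${\cal J}_4$ start at $\tilde\ell_j^2\le\delta^j\ell_j^{2*}\le\delta^j\ell_{j_3}^{2*}$; hence this full schedule is feasible (deviations of earlier jobs equal $k$, later jobs are bounded by $\Delta_a$ via Lemma~\ref{lemma02}(g)) and its $Z_n$ is within $\delta^{j+1}\le\delta^n$ of $Z_n^*$ after the final box rounding. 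So the induction hypothesis must be weakened to ``either a dominating stage-$(j{+}1)$ quadruple exists, or some $n$-box already holds a full schedule with $Z_n\le\delta^{n}Z_n^*$''; your monotonicity remark in Case~3 is correct (the start time $C_{j+1}(\pi^*)-p_{j+1}-k$ is independent of $\ell^1$), but as written it does not cover this Case~2 breakdown, which is precisely where the argument needs work.
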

\begin{proof}
The proof of the performance ratio is done by induction.
Assume that $(a, \ell_a^{1*}, \ell_a^{2*}, Z_a^*) \rightarrow (a+1, \ell_{a+1}^{1*}, \ell_{a+1}^{2*}, Z_{a+1}^* )
\rightarrow \ldots \rightarrow (j_3, \ell_{j_3}^{1*}, \ell_{j_3}^{2*}, Z_{j_3}^*) \rightarrow (n, \ell_n^{1*}, \ell_n^{2*}, Z_n^*)$
is the path of quadruples computed by the exact algorithm in Theorem~\ref{thm07}
that leads to the constrained optimal solution for the problem $(1, h_1 \mid \Delta_{\max}\le k \mid \sum_{j=1}^n w_j C_j)$.
The induction statement is for each $j$, $a \le j \le j_3$ or $j = n$,
there is a quadruple $(j; \ell_j^1, \ell_j^2; Z_j)$ saved by the algorithm {\sc Approx$(a, \epsilon)$},
such that $\ell_j^1 \le \ell_j^{1*}$, $\ell_j^2 \le \ell_j^{2*} \delta^j$, $Z_j \le Z_j^* \delta^j$.

The base case is $j = a$, and the statement holds since there is only one partial schedule on the first $a$ jobs,
described as $(a, \ell_a^1, \ell_a^1, Z_a)$ in Eq.~(\ref{eq12}).
We assume the induction statement holds for $j$, where $a \le j \le j_3$, that is,
there is a quadruple $(j; \ell_j^1, \ell_j^2; Z_j)$ saved by the algorithm {\sc Approx$(a, \epsilon)$},
such that
\begin{equation}
\label{eq18}
\ell_j^1 \le \ell_j^{1*}, \ell_j^2 \le \ell_j^{2*} \delta^j, \mbox{ and } Z_j \le Z_j^* \delta^j.
\end{equation}

When $j < j_3$, from this particular quadruple $(j; \ell_j^1, \ell_j^2; Z_j)$,
we continue to assign the job $J_{j+1} \in {\cal J}_2 \cup {\cal J}_3$ as in the exact algorithm in Subsection 4.1.
In Case 1 where $J_{j+1}$ is added in the later schedule to obtain a partial schedule described as in Eq.~(\ref{eq13}),
we have 
\begin{equation}
\label{eq19}
\ell_{j+1}^1 = \ell_j^1, \ell_{j+1}^2 = \ell_j^2 + p_{j+1}, \mbox{ and } Z_{j+1} = Z_j + w_{j+1} (\ell_j^2 + p_{j+1})).
\end{equation}
Assume this quadruple falls in the box $j$-$B_{i_1, i_2, i_3}$,
then the quadruple $(j+1; \hat{\ell}_{j+1}^1, \hat{\ell}_{j+1}^2; \hat{Z}_{j+1})$ saved in this box by the algorithm {\sc Approx$(a, \epsilon)$} must have
\begin{equation}
\label{eq20}
\hat{\ell}_{j+1}^1 \le \ell_{j+1}^1, \hat{\ell}_{j+1}^2 \le \ell_{j+1}^2 \delta, \mbox{ and } \hat{Z}_{j+1} \le Z_{j+1} \delta.
\end{equation}
If $(j; \ell_j^{1*}, \ell_j^{2*}; Z_j^*)$ leads to $(j+1; \ell_{j+1}^{1*}, \ell_{j+1}^{2*}; Z_{j+1}^*)$ also by
adding $J_{j+1}$ in the later schedule, then we have
\begin{equation}
\label{eq21}
\ell_{j+1}^{1*} = \ell_j^{1*}, \ell_{j+1}^{2*} = \ell_j^{2*} + p_{j+1}, \mbox{ and } Z_{j+1}^* = Z_j^* + w_{j+1} (\ell_j^{2*} + p_{j+1})).
\end{equation}
From Eqs.~(\ref{eq18}--\ref{eq21}) and $\delta > 1$, we have
\begin{equation}
\label{eq22}
\hat{\ell}_{j+1}^1 \le \ell_{j+1}^{1*}, \hat{\ell}_{j+1}^2 \le \ell_{j+1}^{2*} \delta^{j+1}, \mbox{ and } \hat{Z}_{j+1} \le Z_{j+1}^* \delta^{j+1}.
\end{equation}

Similarly, in Case 2 where $J_{j+1}$ is added in the earlier schedule without inserting a machine idling period
to obtain a feasible partial schedule described as in Eq.~(\ref{eq14}),
we can show that if $(j; \ell_j^{1*}, \ell_j^{2*}; Z_j^*)$ leads to $(j+1; \ell_{j+1}^{1*}, \ell_{j+1}^{2*}; Z_{j+1}^*)$ also in this way,
then there is a saved quadruple $(j+1; \hat{\ell}_{j+1}^1, \hat{\ell}_{j+1}^2; \hat{Z}_{j+1})$ by the algorithm {\sc Approx$(a, \epsilon)$}
such that Eq.~(\ref{eq22}) holds.

In Case 3 where $\ell_j^1 + p_{j+1} < C_{j+1}(\pi^*) - k \le T_1$, $J_{j+1}$ is added in the earlier schedule,
and a non-empty machine idling period of length $C_{j+1}(\pi^*) - (\ell_j^1 + p_{j+1}) - k$ is inserted right before it,
the algorithm continuously processes succeeding jobs up to $J_{j_3}$ in the earlier schedule,
and lastly process all the remaining jobs in the later schedule.
When $j = j_3$, the algorithm completes the quadruple by assigning all the jobs of ${\cal J}_4$ in the later schedule, starting at time $\ell_j^2$,
to obtain a full schedule described as in Eq.~(\ref{eq15}).
We can similarly show that if $(j; \ell_j^{1*}, \ell_j^{2*}; Z_j^*)$ leads to $(n; \ell_n^{1*}, \ell_n^{2*}; Z_n^*)$ directly,
or if $(j; \ell_j^{1*}, \ell_j^{2*}; Z_j^*)$ leads to $(j+1; \ell_{j+1}^{1*}, \ell_{j+1}^{2*}; Z_{j+1}^*)$, and so on,
then eventually to $(n; \ell_n^{1*}, \ell_n^{2*}; Z_n^*)$,
then there is also a saved quadruple $(n; \hat{\ell}_n^1, \hat{\ell}_n^2; \hat{Z}_n)$ by the algorithm {\sc Approx$(a, \epsilon)$}
such that Eq.~(\ref{eq22}) holds with $n$ replacing $j+1$.
We therefore finish the proof of the induction statement.

Lastly we use the inequality $(1 + \epsilon/2n)^n \le 1 + \epsilon$ for $0 < \epsilon < 1$ to bound the ratio $\delta^n$.
That is, the total weighted completion time of the output full schedule by the algorithm {\sc Approx$(a, \epsilon)$} is
\[
Z_n^a \le Z_n^* \delta^n \le (1 + \epsilon) Z_n^*.
\]

For the time complexity,
note that for each $j$ such that $a \le j \le j_3$ or $j = n$, there are $O(r_1 r_2 r_3)$ boxes associated with it;
for a saved quadruple $(j; \ell_j^1, \ell_j^2; Z_j)$, it takes $O(1)$ time to assign the job $J_{j+1}$,
leading to at most three new quadruples $(j+1; \ell_{j+1}^1, \ell_{j+1}^2; Z_{j+1})$, each is used to update the corresponding box associated with $j+1$.
It follows that the total running time is $O(n r_1 r_2 r_3)$.
Note from the definitions of $r_i$'s, and $\log \delta = \log(1 + \epsilon/2n) \ge \epsilon/4n$,
that $O(n r_1 r_2 r_3) \subseteq O(n^4/\epsilon^3 \log T_1 \log(T_2 + P) \log(w_{\max}P + W T_2))$.
\end{proof}

Enumerating all possible $O(n)$ choices of $J_a$, that is calling the algorithm {\sc Approx}$(a, \epsilon)$ $O(n)$ times,
the full schedule with the minimum $Z_n^a$ is returned as the final solution.
The overall algorithm is denoted as {\sc Approx}$(\epsilon)$.

\begin{corollary}
\label{coro09}
Algorithm {\sc Approx}$(\epsilon)$ is a $(1 + \epsilon)$-approximation for the problem $(1, h_1 \mid \Delta_{\max}\le k \mid \sum_{j=1}^n w_j C_j)$;
its time complexity is
$O\left(\frac{n^5}{\epsilon^3} \log T_1 \log(T_2 + P) \log(w_{\max}P + W T_2)\right)$.
\end{corollary}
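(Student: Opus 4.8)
The plan is to obtain Corollary~\ref{coro09} as an essentially immediate consequence of Theorem~\ref{thm08} combined with the structural results of Section~\ref{sp}. First I would recall that, by Lemmas~\ref{lemma01} and~\ref{lemma02} together with the discussion preceding Corollary~\ref{coro03}, there is an optimal schedule $\sigma^*$ for $(1, h_1 \mid \Delta_{\max}\le k \mid \sum_{j=1}^n w_j C_j)$ whose first job in the later schedule, call it $J_{a^*}$, lies in $\{J_1, J_2, \ldots, J_{j_1}\}$, starts processing exactly at time $T_2$, and satisfies $\Delta_{a^*} = T_2 - S_{a^*}(\pi^*) \le k$. Consequently, the global optimum value $Z^*$ of the problem coincides with the optimum value of the subproblem in which $J_{a^*}$ is constrained to start at time $T_2$.

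Next I would assemble the pieces. Applying Theorem~\ref{thm08} with $a = a^*$, the call {\sc Approx}$(a^*, \epsilon)$ returns a feasible schedule whose total weighted completion time is at most $(1+\epsilon)$ times the optimum of that constrained subproblem, hence at most $(1+\epsilon)Z^*$. Since {\sc Approx}$(\epsilon)$ invokes {\sc Approx}$(a, \epsilon)$ for every admissible $J_a$ (in particular for $a = a^*$) and outputs the schedule attaining the smallest $Z_n^a$ over all these calls, its output is a feasible schedule of value at most $(1+\epsilon)Z^*$ — feasibility being inherited call-by-call from Theorem~\ref{thm08}. This yields the claimed $(1+\epsilon)$-approximation guarantee. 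For the running time, there are $O(n)$ candidate jobs $J_a$ (those among $J_1, \ldots, J_{j_1}$ with $\Delta_a \le k$), so {\sc Approx}$(\epsilon)$ amounts to $O(n)$ invocations of {\sc Approx}$(a, \epsilon)$ plus an $O(n)$-size final minimization; multiplying the bound of Theorem~\ref{thm08} by $n$ gives the stated $O\!\left(\frac{n^5}{\epsilon^3} \log T_1 \log(T_2 + P) \log(w_{\max}P + W T_2)\right)$ time.

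The only step I would treat with real care is the first one: certifying that the global optimum is indeed captured among the constrained optima being enumerated, which rests on the existence of an optimal schedule of the regular form guaranteed by Lemmas~\ref{lemma01}--\ref{lemma02} whose later schedule begins precisely at $T_2$. Once that is in hand, the remainder is a one-line reduction to Theorem~\ref{thm08}, so I do not anticipate any further obstacle.
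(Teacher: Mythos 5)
Your proposal is correct and follows essentially the same route as the paper: the paper's algorithm {\sc Approx}$(\epsilon)$ is defined precisely as the enumeration of the $O(n)$ candidates $J_a$ (the first job of the later schedule, which by Lemmas~\ref{lemma01} and~\ref{lemma02} starts at time $T_2$ in some optimal schedule), each handled by {\sc Approx}$(a,\epsilon)$ via Theorem~\ref{thm08}, with the best $Z_n^a$ returned, yielding the $(1+\epsilon)$ guarantee and the extra factor of $n$ in the running time. Your care about certifying that the global optimum is captured among the constrained optima is exactly the point the paper settles via Lemma~\ref{lemma02}(f)--(g) and the discussion preceding Corollary~\ref{coro03}, so no gap remains.
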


\subsection{An FPTAS for the general case}
%-------------------------------------------------------------------------------
In this subsection, we derive an FPTAS for the general rescheduling problem $(1, h_1 \mid \Delta_{\max}\le k \mid \mu\Delta_{\max} + \sum_{j=1}^n w_j C_j)$
by invoking multiple times the approximation algorithm {\sc Approx$(a, \epsilon)$} for the special case $\mu = 0$.

The job $J_a$ starts processing at time $T_2$.
It follows that the maximum time deviation is bounded $\Delta_a \le \Delta_{\max} \le k$,
and we use $(1+\epsilon)$ to split the interval $[\Delta_a, k]$ into subintervals
\[
I_1^4 = [\Delta_a, \Delta_a (1+\epsilon)], \ I_2^4 = (\Delta_a (1+\epsilon), \Delta_a (1+\epsilon)^2], \ \ldots, \ I_{r_4}^4 = (\Delta_a (1+\epsilon)^{r_4}, k],
\]
where $r_4 \triangleq \lfloor \log_{1+\epsilon} (k /\Delta_a)\rfloor$.

Let $k_i = \Delta_a (1+\epsilon)^i$, for $i = 1, 2, \ldots, r_4$, and $k_{r_4+1} = k$.
Using $k_i$ as the upper bound on the maximum time deviation, that is, $k_i$ replaces $k$,
our algorithm calls {\sc Approx$(a, \epsilon)$} to solve the problem $(1, h_1 \mid \Delta_{\max}\le k_i \mid \sum_{j=1}^n w_j C_j)$,
by returning a constrained full schedule of the total weighted completion time within $(1 + \epsilon)$ of the minimum.
We denote this full schedule as $\sigma(a, k_i)$.
The final output full schedule is the one of the minimum objective function value among $\{\sigma(a, k_i) \mid 1 \le a \le j_1, \ 1 \le i \le r_4+1\}$,
which is denoted as $\sigma^\epsilon$ with the objective function value $Z^\epsilon$.
We denote our algorithm as {\sc Approx$^\mu(\epsilon)$}.

\begin{theorem}
\label{thm10}
Algorithm {\sc Approx}$^\mu(\epsilon)$ is a $(1 + \epsilon)$-approximation for the general rescheduling problem
$(1, h_1 \mid \Delta_{\max}\le k \mid \mu\Delta_{\max} + \sum_{j=1}^n w_j C_j)$;
its time complexity is\\
$O\left(\frac{n^5}{\epsilon^4} \log T_1 \log(T_2 + P) \log(w_{\max}P + W T_2) \log k\right)$.
\end{theorem}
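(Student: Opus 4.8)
The plan is to establish the two claims --- the $(1+\epsilon)$ approximation guarantee and the running time --- by comparing the output of {\sc Approx}$^\mu(\epsilon)$ against a fixed optimal schedule $\sigma^*$ that we take to satisfy Lemmas~\ref{lemma01} and~\ref{lemma02}; write $\Delta^*$ for $\Delta_{\max}(\sigma^*)$ and $Z^*=\mu\Delta^*+\sum_{j=1}^n w_jC_j(\sigma^*)$ for the optimal objective value. First I would look at the first job $J_{a^*}$ of the later schedule of $\sigma^*$: as recorded in Section~\ref{sp}, $J_{a^*}$ starts processing exactly at time $T_2$, its deviation is $\Delta_{a^*}=T_2-S_{a^*}(\pi^*)$, and $\Delta_{a^*}\le\Delta^*\le k$. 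Since {\sc Approx}$^\mu(\epsilon)$ iterates over all admissible choices of $J_a$, it in particular runs the branch $a=a^*$, and it suffices to show that this branch alone produces a candidate schedule of objective value at most $(1+\epsilon)Z^*$.

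Within that branch, the second step handles the discretization of $\Delta_{\max}$. It is convenient to set $k_0:=\Delta_{a^*}$, so that $k_i=(1+\epsilon)k_{i-1}$ for $1\le i\le r_4$ while $k_{r_4+1}=k<(1+\epsilon)k_{r_4}$, the last inequality coming from $r_4=\lfloor\log_{1+\epsilon}(k/\Delta_{a^*})\rfloor$ (note $\Delta_{a^*}\ge T_2-T_1\ge1$, so $r_4$ is finite). The same definition of $r_4$ gives $k_i\le k$ for every $i$, so each schedule $\sigma(a^*,k_i)$ has $\Delta_{\max}\le k_i\le k$ and is a feasible --- hence legitimate --- candidate of {\sc Approx}$^\mu(\epsilon)$. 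Moreover, since $k_0\le\Delta^*\le k_{r_4+1}$, taking $t$ to be the least index with $\Delta^*\le k_t$ yields $k_{t-1}\le\Delta^*\le k_t\le(1+\epsilon)k_{t-1}\le(1+\epsilon)\Delta^*$, i.e. a threshold $k_t$ sandwiched as $\Delta^*\le k_t\le(1+\epsilon)\Delta^*$.

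The third step invokes Theorem~\ref{thm08}. For the threshold $k_t$, the schedule $\sigma^*$ is feasible for the special case $(1,h_1\mid\Delta_{\max}\le k_t\mid\sum_{j=1}^n w_jC_j)$ under the constraint that $J_{a^*}$ starts at time $T_2$ (since $\Delta_{\max}(\sigma^*)=\Delta^*\le k_t$), so the constrained optimum of that objective is at most $\sum_{j=1}^n w_jC_j(\sigma^*)$, and by Theorem~\ref{thm08} the schedule $\sigma(a^*,k_t)$ returned by {\sc Approx}$(a^*,\epsilon)$ satisfies $\sum_{j=1}^n w_jC_j(\sigma(a^*,k_t))\le(1+\epsilon)\sum_{j=1}^n w_jC_j(\sigma^*)$ while $\Delta_{\max}(\sigma(a^*,k_t))\le k_t$. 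Evaluating the true objective of this candidate and using $k_t\le(1+\epsilon)\Delta^*$ gives $\mu\Delta_{\max}(\sigma(a^*,k_t))+\sum_{j=1}^n w_jC_j(\sigma(a^*,k_t))\le\mu k_t+(1+\epsilon)\sum_{j=1}^n w_jC_j(\sigma^*)\le(1+\epsilon)\big(\mu\Delta^*+\sum_{j=1}^n w_jC_j(\sigma^*)\big)=(1+\epsilon)Z^*$; since {\sc Approx}$^\mu(\epsilon)$ returns the candidate of smallest objective value, $Z^\epsilon\le(1+\epsilon)Z^*$.

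The point I expect to require the most care --- and the reason that grid-searching over $\Delta_{\max}$ on top of an FPTAS still costs only a single factor $(1+\epsilon)$ rather than $(1+\epsilon)^2$ --- is that the two approximation errors act on disjoint parts of the objective: the discretization error of the grid is charged only against $\mu\Delta^*$ (via $k_t\le(1+\epsilon)\Delta^*$), while the FPTAS error of {\sc Approx}$(a^*,\epsilon)$ is charged only against $\sum_j w_jC_j(\sigma^*)$; I would make this separation explicit in the write-up. The running-time bound is then a straightforward product: there are $O(j_1)=O(n)$ choices of $J_a$, for each of them $r_4+1=O((\log k)/\epsilon)$ thresholds $k_i$, each triggering one call to {\sc Approx}$(a,\epsilon)$ at cost $O(n^4/\epsilon^3\,\log T_1\log(T_2+P)\log(w_{\max}P+WT_2))$ by Theorem~\ref{thm08}, and the final minimization over the resulting $O((n/\epsilon)\log k)$ candidates is dominated by those calls; multiplying the factors yields $O(n^5/\epsilon^4\,\log T_1\log(T_2+P)\log(w_{\max}P+WT_2)\log k)$.
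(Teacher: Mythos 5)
Your proposal is correct and follows essentially the same route as the paper's own proof: fix an optimal $\sigma^*$ satisfying Lemmas~\ref{lemma01} and~\ref{lemma02}, run the branch with the job $J_a$ that starts at $T_2$ in $\sigma^*$, pick the grid threshold $k_t$ with $\Delta^*_{\max}\le k_t\le(1+\epsilon)\Delta^*_{\max}$, apply Theorem~\ref{thm08} for that bound, and charge the two $(1+\epsilon)$ losses to the two separate terms of the objective, with the identical running-time count. Your treatment of the boundary threshold $k_{r_4+1}=k$ and the finiteness of $r_4$ is slightly more explicit than the paper's, but it is the same argument.
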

\begin{proof}
Let $\sigma^*$ denote an optimal schedule for the problem with the objective function value $Z^*$,
satisfying Lemmas~\ref{lemma01} and \ref{lemma02}.
Suppose the job starts processing at time $T_2$ in $\sigma^*$ is $J_a$, and the maximum time deviation in $\sigma^*$ is $\Delta_{\max}^* = k^* \le k$.
We thus conclude that $\sigma^*$ is also a constrained optimal schedule for the problem by replacing the upper bound $k$ with $k^*$, i.e.,
$(1, h_1 \mid \Delta_{\max}\le k^* \mid \mu\Delta_{\max} + \sum_{j=1}^n w_j C_j)$,
under the constraint that $J_a$ starts processing at time $T_2$.

Consider the value $k_i = \Delta_a (1+\epsilon)^i$ such that $\Delta_a (1+\epsilon)^{i-1} \le k^* \le \Delta_a (1+\epsilon)^i$.

Clearly, for the schedule $\sigma(a, k_i)$ found by the algorithm {\sc Approx$(a, \epsilon)$}
to the problem $(1, h_1 \mid \Delta_{\max}\le k_i \mid \sum_{j=1}^n w_j C_j)$,
it has the total weighted completion time $Z(\sigma) \le (1 + \epsilon) Z(\sigma^*)$
and the maximum time deviation $\Delta_{\max}(\pi^*, \sigma) \le k_i \le k^* (1+\epsilon)$. 
It follows that
\[
Z^\epsilon \le \mu\Delta_{\max}(\pi^*, \sigma) + Z(\sigma) \le \mu k^* (1+\epsilon) + (1 + \epsilon) Z(\sigma^*) = (1 + \epsilon) Z^*.
\]

Note that we call the algorithm {\sc Approx$(a, \epsilon)$} for all possible values of $a$,
and for each $a$ we call the algorithm $r_4 + 1 = O(\frac 1{\epsilon} \log k)$ times (using the inequality $\log(1 + \epsilon) \ge \frac 12 \epsilon$).
Thus from Theorem~\ref{thm08} the total running time of the algorithm {\sc Approx$^\mu(\epsilon)$} is in
\[
O\left(\frac{n^5}{\epsilon^4} \log T_1 \log(T_2 + P) \log(w_{\max}P + W T_2) \log k\right).
\]
\end{proof}

\section{Concluding remarks}
%===============================================================================
We investigated a rescheduling problem where a set of jobs has already been scheduled to minimize the total weighted completion time on a single machine,
but a disruption causes the machine become unavailable for a given time interval.
The production planner needs to reschedule the jobs without excessively altering the originally planned schedule.
The degree of alteration is measured as the maximum time deviation for all the jobs between the original and the new schedules.
We studied a general model where the maximum time deviation is taken both as a constraint and as part of the objective function.
We presented a pseudo-polynomial time exact algorithm based on dynamic programming and an FPTAS.
We remark that the FPTAS is developed from another slower exact algorithm for the special case
where the maximum time deviation is not taken as part of the objective function,
which gives us room to properly sample the maximum time deviation.

\section*{Acknowledgments}
%===============================================================================
W.L. was supported by K. C. Wong Magna Found in Ningbo University, the China Scholarship Council (Grant No. 201408330402),
and the Ningbo Natural Science Foundation (2016A610078).
T.L. was supported by Brain Canada and NSF China (Grant Nos. 61221063 and 71371129).
R.G. was supported by NSERC Canada.
G.L. was supported by NSERC Canada and NSF China (Grant No. 61672323).

\newpage
%===============================================================================
%\bibliography{../../BiBTeX/scheduling,../../BiBTeX/general}

\newpage
\appendix
\section{Proof of Lemma~\ref{lemma01}}
%===============================================================================
\begin{proof}
(Item (b) of Lemma~\ref{lemma01})
By contradiction, assume $(J_i, J_j)$ is the first pair of jobs for which $J_i$ precedes $J_j$ in $\pi^*$, i.e. $p_i/w_i \le p_j/w_j$,
but $J_j$ immediately precedes $J_i$ in the later schedule of $\sigma^*$.
Let $\sigma'$ denote the new schedule obtained from $\sigma^*$ by swapping $J_j$ and $J_i$.
If $C_j(\sigma') \ge C_j(\pi^*)$, then $C_i(\sigma') \ge C_i(\pi^*)$ too,
and thus $\Delta_j(\sigma', \pi^*) \le \Delta_i(\sigma', \pi^*) < \Delta_i(\sigma^*, \pi^*)$ due to $p_j > 0$;
if $C_j(\sigma') < C_j(\pi^*)$ and $C_i(\sigma') \le C_i(\pi^*)$,
then $\Delta_i(\sigma', \pi^*) \le \Delta_j(\sigma', \pi^*) < \Delta_j(\sigma^*, \pi^*)$ due to $p_i > 0$;
lastly if $C_j(\sigma') < C_j(\pi^*)$ and $C_i(\sigma') > C_i(\pi^*)$,
then $\Delta_i(\sigma', \pi^*) < \Delta_i(\sigma^*, \pi^*)$ and $\Delta_j(\sigma', \pi^*) < \Delta_j(\sigma^*, \pi^*)$.
That is, $\sigma'$ is also a feasible reschedule.

Furthermore, the weighted completion times contributed by $J_i$ and $J_j$ in $\sigma'$ is no more than those in $\sigma^*$,
implying the optimality of $\sigma'$.
If follows that, if necessary,
after a sequence of job swappings we will obtain an optimal reschedule in which the jobs in the earlier schedule are in the same order as they appear in $\pi^*$.
\end{proof}

\section{Proof of Lemma~\ref{lemma02}}
%===============================================================================
\begin{proof}
Item (a) is a direct consequence of Lemma~\ref{lemma01}, since the jobs in the earlier schedule are in the same order as they appear in $\pi^*$,
which is the WSPT order.
That is, for the job $J_j$ in the earlier schedule of $\sigma^*$, if all the jobs $J_i$, for $i = 1, 2, \ldots, j-1$, are also in the earlier schedule,
then $C_j(\sigma^*) = C_j(\pi^*)$;
otherwise, $C_j(\sigma^*) < C_j(\pi^*)$.

For item (b), assume the machine idles before processing the jobs $J_{j_1}$ and $J_{j_2}$, with $J_{j_1}$ preceding $J_{j_2}$ in the earlier schedule.
We conclude from Lemma~\ref{lemma01} and item (a) that if $\Delta_{j_2} < \Delta_{j_1} \le \Delta_{\max}$
(or $\Delta_{j_1} < \Delta_{j_2} \le \Delta_{\max}$, respectively),
then moving the starting time of the job $J_{j_2}$ ($J_{j_1}$, respectively) one unit ahead
will maintain the maximum time deviation and decrease the total weighted completion time, which contradicts the optimality of $\sigma^*$.
It follows that we must have $\Delta_{j_1} = \Delta_{j_2}$;
in this case, if there is any job $J_j$ with $j_1 < j < j_2$ in the later schedule,
it can be moved to the earlier schedule to decrease the total weighted completion time, which again contradicts the optimality of $\sigma^*$.
Therefore, there is no job $J_j$ with $j_1 < j < j_2$ in the later schedule,
which together with $\Delta_{j_1} = \Delta_{j_2}$ imply that the machine does not idle before processing the job $J_{j_2}$.

Item (c) is clearly seen for the same reason used in the last paragraph,
that firstly their time deviations have to be the same,
and secondly if this time deviation is less than $\Delta_{\max}$,
one can then move their starting time one unit ahead  to decrease the total weighted completion time while maintaining the maximum time deviation,
thus contradicting the optimality of $\sigma^*$.

Item (d) is implies by Item (c), given that all the job processing times are positive.

Let the job in the earlier schedule right after the idle time period be $J_j$.
clearly, there is a job $J_a$ with $S_a(\pi^*) < S_j(\sigma^*)$ in the later schedule of $\sigma^*$, due to the machine idling.
The time deviation for $J_a$ is $\Delta_a > T_2 - S_a(\pi^*)$.
If $S_j(\pi^*) < T_2$, then $\Delta_{\max} = \Delta_j < T_2 - S_j(\sigma^*) < T_2 - S_a(\pi^*) < \Delta_a$, a contradiction.
This proves item (e) that $S_j(\pi^*) \ge T_2$.

Item (f) is clearly seen from Lemma~\ref{lemma01} and the optimality of $\sigma^*$,
that if the machine idles then it can start processing the jobs after the idling period earlier to decrease the total weighted completion time,
while maintaining or even decreasing the maximum time deviation.

From item (f) and the second part of Lemma~\ref{lemma01}, we see that the deviation times of the jobs in the later schedule are non-increasing.
Therefore, item (g) is proved.
\end{proof}

%===============================================================================
\end{document}